\newcommand\ddfrac[2]{\frac{\displaystyle #1}{\displaystyle #2}}
\newtheorem{lemma}{ Lemma}[section]
\newtheorem{proposition}{ Proposition}[section]
\newtheorem{remark}{ Remark}
\newtheorem{theorem}{ Theorem}[section]
\newtheorem{definition}{Definition}[section]
\newtheorem{notation}{Notation}[section]
\newtheorem{example}{Example}[section]
\DeclareMathOperator*{\argmax}{arg\,max}
\title{Partial-Information Q-Learning for General Two-Player Stochastic Games}
\author{\thanks{This work was partially supported by NSF grant DMS-1907518.}N. Medhin\thanks{Department of Mathematics, College of Sciences, North Carolina State University, Raleigh, NC, 27695
{\tt\small ngmedhin@ncsu.edu}}, A. Papanicolaou\thanks{Department of Mathematics, College of Sciences, North Carolina State University, Raleigh, NC, 27695
{\tt\small apapani@ncsu.edu}}, M. Zrida\thanks{Department of Operations Research, College of Sciences, North Carolina State University, Raleigh, NC, 27695
{\tt\small mzrida@ncsu.edu}}, 
}
\begin{document}

\maketitle 
\begin{abstract}
    In this article we analyze a partial-information Nash Q-learning algorithm for a general 2-player stochastic game. Partial information refers to the setting where a player does not know the strategy or the actions taken by the opposing player. We prove convergence of this partially informed algorithm for general 2-player games with finitely many states and actions, and we confirm that the limiting strategy is in fact a full-information Nash equilibrium. In implementation, partial information offers simplicity because it avoids computation of Nash equilibria at every time step. In contrast, full-information Q-learning uses the Lemke-Howson algorithm to compute Nash equilibria at every time step, which can be an effective approach but requires several assumptions to prove convergence and may have runtime error if Lemke-Howson encounters degeneracy. In simulations, the partial information results we obtain are comparable to those for full-information Q-learning and fictitious play. 
\end{abstract}

\section{Introduction}

We consider a general 2-player stochastic game where players learn their respective strategies over repeated rounds. Learning is conducted under partial information, which means a player does not observe the other player's actions or strategy. The only information available for learning is a commonly observed state variable, and of course a player will observe her own actions. Learning in the imperfect-information setting of \cite{heinrich2016deep} was shown to allow for improved outcomes in LeDuc and Limit Texas Hold'em poker. Compared to full information, partially informed learning is easier to implement because players do not need to find best responses to the strategies of other players. That is, a player computes a strategy that is an optimal response to their respective set of observations. Additionally, these limiting partial-information strategies are in fact a full-information Nash equilibrium, which we will prove in this paper.

The game that we consider has a finite space for the state variable, and finitely many possible actions that can be taken by the players. Evaluation of strategies can be done with a straight-forward tableau in a manner similar to Q-learning. For standard Markovian optimization problems there are Bellman equations, for which Q-learning is an effective estimation because there is usually a contraction mapping \cite{rust2008dynamic,sutton2018reinforcement}. The Q-functions for stochastic games have recursive equations similar to Bellman equations, but proofs for convergence and uniqueness are considerably more involved because there is not an easily identifiable contraction operator, except for special cases like zero-sum, symmetric or cooperative games \cite{hu1998multiagent,hu2003nash,littman2001friend,littman2001value,szepesvari1999unified,wang2002reinforcement}. For this reason, it is challenging to design a reinforcement learning algorithm that converges to a general Nash equilibrium. 

The full-information Nash Q-learning algorithm in \cite{hu1998multiagent} is proven to converge provided that intermittent Q-functions have either a global Nash equilibrium or a Nash saddle point, but such conditions are difficult to ensure. Additionally, full-information Q-learning requires computation of a joint Nash strategy at each iteration, either using the Lemke-Howson algorithm \cite{lemke1964equilibrium} or a fictitious-play approach \cite{brown1951iterative}. Repeatedly calling the Lemke-Howson algorithm can be slow; fictitious play can be a viable alternative but often has difficulty converging \cite{fudenberg1993learning,shamma2004unified}. 

The main result of this paper is a theorem proving convergence of partial-information Q-learning for general 2-player games having finite-action spaces, finite state space, and bounded reward functions. We are also able to confirm that these limiting strategies form a full-information Nash equilibrium. A similar partially informed learning for a min-max game was considered in \cite{kozuno2021model}, for which they showed regret bounds. For demonstration, we implement a deep neural network adaption of our partial-information algorithm to the Gridworld game (see \cite{hu1998multiagent}) and to LeDuc poker (see \cite{heinrich2016deep}), and we find outcomes that are comparable to those in the literature. 

The rest of the paper is organized as follows: Section \ref{sec:main} introduces the notation and definitions for our stochastic game; Section \ref{sec:partialInfo} introduces partial information, gives some conceptual examples, and proves the convergence of the partial-information algorithm; Section \ref{sec:simulations} presents simulation examples of (a) a randomly generated 2-player game (b) the Gridworld game learned with a Deep Q-Network (DQN) adaption of our algorithm, and (c) Led'uc Holdem learned also with the DQN adaption.

\section{Formulation of Stochastic Game}
\label{sec:main}
The actions available to Player 1 are a finite set denoted $A^1$, and the actions available to Player 2 are a finite set denoted $A^2$. We denote time with $t$, and we denote Players 1 and 2s' time-$t$ actions as $a_t^1\in A^1$ and $a_t^2\in A^2$, respectively. At a given time $t$ there is a state vector $s_t$ that takes value in a finite space $S$. This state vector make transitions in $S$ that are affected by the actions of both players. This transition distribution is expressed as
\[p(s'|s,a,a') :=\mathbb P(s_{t+1} = s'|s_t = s,a_t^1=a,a_t^2=a')\ ,\]
for any $a\in A^1$ and and $a'\in A^2$. Given state vector $s\in S$, the reward to Player 1 for taking action $a^1\in A^1$ when player 2 takes action $a^2\in A^2$, is $r^1(s,a^1,a^2)$. The reward for Player 2 in the same situation is $r^2(s,a^1,a^2)$. 

\begin{definition}
    \label{def:twoPlayerGame}
    We define our game by a tuple $\langle S, A^1, A^2, r^1, r^2, p \rangle$ where $S$ is respectively the game's state space, $A^1$ and $A^2$ are respectively player 1 and player 2 action spaces, $r^1, r^2: S \times A^1 \times A^2 \rightarrow \mathbb R^1$ are the payoff functions for our two players and $p: S \times A^1 \times A^2 \rightarrow \mathcal P(S)$ is the transition probability map with $\mathcal P(S)$ the set of probability distributions over $S$.
\end{definition}

\begin{definition}[Set of Distribution Vectors on a Set]
    \label{def:dist_vectors}
    For a finite set $S$ we let $\mathcal P(S)$ denote the space of probability vectors on $S$. For example, if $p\in\mathcal P(S)$, then $p(s)\in[0,1]$ and $\sum_{s\in S}p(s) = 1$.
\end{definition}

Let $\pi^i$ denote the strategy of Player $i$ for $i=1,2$. A strategy is a mapping from the state to a distribution on $i$’s actions,
\[s \mapsto \pi^i(s)\in  \mathcal P(A^i)\ ,\]
for any $s\in S$ with $\mathcal P(A^i)$ being the set of probability distribution vectors on $A^i$ as described in Definition \ref{def:dist_vectors}. For a given state $s\in S$, $\pi^i(s)$ is a vector indexed by the elements of $A^i$, which we denote as $\pi^i(s)=\left(\pi^i(s,a)\right)_{a\in A^i}$, and is a probability distribution such that 
\[\pi^i(s,a) =\mathbb P(a_t^i=a|s_t=s)\ ,\]
for all $a\in A^i$. That is, Player $i$'s time-$t$ action $a_t^i$, conditional on $s_t$, will be a random draw from $A^i$ according to the probability distribution $\pi^i(s_t)$, which we denote as
\[a_t^i\sim \pi^i(s_t,\cdot)\ .\]
If $\pi^i(s_t)$ is a point mass at some $a\in A^i$ then the strategy is a deterministic function of the state. Given strategies $\pi^1$ and $\pi^2$, the value functions of Players 1 and 2 are, 
\begin{align}
    \label{eq:vi}  v^i(s,\pi^1,\pi^2)&=\sum_{t=0}^\infty\gamma_i^t\mathbb E[r^i(s_t,a_t^1,a_t^2)|\pi^1,\pi^2,s_0=s]\quad\hbox{for }i=1,2\ ,
\end{align}
where $\gamma_i\in(0,1)$ are the players' discount factors. The probabilistic flow that produces the actions and state process nn \eqref{eq:vi} can summarized as follows,

\[\boxed{\begin{array}{c}\\s_{t}\sim p(\cdot|s_{t-1},a_{t-1}^1,a_{t-1}^2)\\
\\\end{array}}~~~\Rightarrow~~~ \boxed{\begin{array}{c}
a_t^1\sim\pi^1(s_t,\cdot)\\
a_t^2\sim\pi^2(s_t,\cdot)
\end{array}}
~~~~\Rightarrow~~~ \boxed{\begin{array}{c}\\s_{t+1}\sim p(\cdot|s_t,a_t^1,a_t^2)\\
\\\end{array}}\ , \]
where ``$\Rightarrow $" here means that the values on the left feed into the conditioning for random draws to the right. Our assumption throughout will be that, given $s_t=s$, each $a_t^i$ is an independent draw from $\pi^i(s,\cdot)$ for $i=1,2$.
\subsection{Nash Equilibria}
\label{sec:nashEquibriaDefs}
Let $\Pi^i$ for $i=1,2$ denote the space of possible strategies for the players. 

\begin{definition}[Nash Equilibrium]
    \label{def:Nash}
    A Nash equilibrium is a pair $(\pi_*^1,\pi_*^2)\in\Pi^1\times\Pi^2$ such that for all $s\in S$, 
\begin{align}
    \label{eq:NashEquilibrium}
    v^1(s,\pi_*^1,\pi_*^2)&\geq v^1(s,\pi^1,\pi_*^2)\quad\forall \pi^1\in\Pi^1\\
    \nonumber
    v^2(s,\pi_*^1,\pi_*^2)&\geq v^2(s,\pi_*^1,\pi^2)\quad\forall \pi^2\in\Pi^2\ ,
\end{align}
where $v^1$ and $v^2$ are the values functions given in \eqref{eq:vi}.
\end{definition}

In some sense, each player seeks a strategy to maximize the expectation of their respective reward. However, such an optimization is complicated because the opposing player's optimization will respond. If players' actions are rational and fully informed (i.e., both players know the other player's' strategy) then Player 1's strategy will be a best response to the strategy of Player 2, and similarly Player 2's strategy will be a best response to Player 1's strategy. 
\begin{definition}[Nash Q-Functions]
    \label{def:Qfunc}
    For any Nash equilibrium $(\pi_*^1,\pi_*^2)\in \Pi^1\times\Pi^2$, Player $i$'s Q-function is the reward received for a pair of actions played plus the future rewards when both players follow $(\pi_*^1,\pi_*^2)$, 
\begin{align}
    \label{eq:Qi}    Q_*^i(s,a^1,a^2)&=r^i(s,a^1,a^2)+\gamma_i\sum_{s'\in S}v^i(s',\pi_*^1,\pi_*^2)p(s'|s,a^1,a^2)\ ,
\end{align}
for all pairs $(a^1,a^2)\in A^1\times A^2$. 
\end{definition}

From equation \eqref{eq:Qi} we want to obtain a a tuple $\left<Q_*^1,Q_*^2,\pi_*^1,\pi_*^2\right>$ However, there is non-uniqueness because it is quite likely that there are multiple Nash equilibria, which means that a $Q_*^i$ for equation \eqref{eq:Qi} is non-unique, i.e., if there is a change in criterion for selection of Nash points then the ensuing $Q_*^i$ will change. Moreover, there may not be any favorable criteria for selection of Nash equilibria, which leads to some uncertainty in how to compute $Q_*^i$ from equation \eqref{eq:Qi} (this uncertainty will be discussed Section \ref{sec:uncertainty}). 

Nonetheless, equation \eqref{eq:Qi} remains an important tool in computation, and is useful for evaluation of players actions for specific states. In particular, the Q-functions introduced in Definition \ref{def:Qfunc} allow us to differentiate between so-called pure strategy and mixed strategy Nash equilibria. 

\begin{definition}[Pure-Strategy Nash Equilibrium]
    \label{def:pureNash}
    A pure-strategy Nash equilibrium has a pair of actions $(a_*^1(s),a_*^2(s))$ upon which neither player will find increased reward by moving unilaterally, 
\begin{align*}
    Q_*^1(s,a_*^1(s),a_*^2(s))&\geq Q_*^1(s,a^1(s),a_*^2(s))\qquad\forall a^1\in A^1\\
    \nonumber
    Q_*^2(s,a_*^1(s),a_*^2(s))&\geq Q_*^2(s,a_*^1(s),a^2(s))\qquad\forall a^2\in A^2\ ,
\end{align*}
where $Q_*^1$ and $Q_*^2$ are the $Q$-functions given in \eqref{eq:Qi}.
\end{definition}
The pure Nash points of Definition \ref{def:pureNash} are a natural way to understand action choices made in a 2-player game, but it is possible that no pure Nash points exist. For each $s\in S$ define the following matrices,
\begin{equation}
    \label{eq:Qmatrices}
    Q_*^i(s) = \Big(Q_*^i(s,a^1,a^2)\Big)_{(a^1,a^2)\in A^1\times A^2}\qquad\hbox{for }i=1,2\ .
\end{equation}
\begin{example}[Actions Spaces with 3 Elements]
Suppose that $A^1= \{1,2,3\}$ and $A^2= \{1,2,3,4\}$. A tuple $\langle Q^1,Q^2,\pi^1,\pi^2\rangle $ has $3\times 4$ matrices
\begin{align*}
Q^i(s) &= 
\begin{pmatrix}
Q^i(s,1,1)&Q^i(s,1,2)&Q^i(s,1,3)&Q^i(s,1,4)\\
Q^i(s,2,1)&Q^i(s,2,2)&Q^i(s,2,3)&Q^i(s,2,4)\\
Q^i(s,3,1)&Q^i(s,3,2)&Q^i(s,3,3)&Q^i(s,3,4)
\end{pmatrix}
\end{align*}
for $i=1,2,$ and for all $s\in S$, and strategy vectors
\[\pi^1(s) = 
\begin{pmatrix}
\pi^1(s,1)\\
\pi^1(s,2)\\
\pi^1(s,3)
\end{pmatrix}~~~\hbox{and}~~~\pi^2(s) = 
\begin{pmatrix}
\pi^2(s,1)\\
\pi^2(s,2)\\
\pi^2(s,3)\\
\pi^2(s,4)
\end{pmatrix}\ ,\]
with $\sum_{\ell=1}^3\pi^1(s,\ell)=1=\sum_{\ell=1}^4\pi^2(s,\ell)$. Given $s_t=s$, the expectation of $Q^i(s,a_t^1,a_t^2)$ is the multiplication of $Q^i(s)$ on the left by $\pi^1(s)$ and from the right by $\pi^2(s)$, 
\[\pi^1(s)^\top Q^i(s)\pi^2(s) = \sum_{\ell=1}^3\sum_{k=1}^4Q^i(s,\ell,k)\pi^1(s,\ell)\pi^2(s,k)\ ,\]
where super-script $\top$ denotes matrix/vector transpose.
\end{example}

For a given $s\in S$, $Q_*^1(s)$ and $Q_*^2(s)$ form a bi-matrix game, for which there may not be a pure Nash strategy, but it was shown by \cite{nash1951non} that a mixed Nash strategy always exists. Before giving the definition for mixed Nash strategies, we first introduce some notation that was also used in \cite{hu1998multiagent}.

\begin{notation}[Distribution Vector Multiplication]
    \label{def:distMultiplication}
    For distribution vectors $\pi^i\in\mathcal P(A^i)$ for $i=1,2$, we let $\pi^1\pi^2Q^i(s)$ denote multiplication of matrix $Q^i(s)$ with vectors $\pi^1(s)$ and $\pi^2(s)$ on the left and from the right, respectively, that is,
    \[\pi^1\pi^2Q^i(s):=\pi^1(s)^\top Q^i(s)\pi^2(s)
    \ ,\]
    where super-script $\top$ denotes matrix/vector transpose.
    \end{notation}

\begin{definition}[Mixed-Strategy Nash Equilibrium]
    \label{def:mixedNash}
    A mixed-strategy Nash equilibrium is a pair of strategies $(\pi_*^1(s),\pi_*^2(s))$ from which neither player will find increased reward by moving unilaterally, 
\begin{align*}
    \pi_*^1\pi_*^2 Q_*^1(s)&\geq \pi^1\pi_*^2 Q_*^1(s)\qquad\forall \pi^1\in\Pi^1\\
    \nonumber
    \pi_*^1\pi_*^2 Q_*^2(s)&\geq \pi_*^1\pi^2 Q_*^2(s)\qquad\forall \pi^2\in\Pi^2\ ,
\end{align*}
where $Q_*^1(s)$ and $Q_*^2(s)$ are matrices like those in \eqref{eq:Qmatrices} with $Q$-functions given by \eqref{eq:Qi}.
\end{definition}
Existence of a mixed Nash strategy for the dynamic game described by equation \eqref{eq:vi} is proven by  \cite{fink1964equilibrium}. The proof uses a Kakhutani fixed point theorem to show that an iteration of an equation similar to \eqref{eq:Qi} will converge to the value function of a Nash equilibrium. In terms of the tuple $\left<Q_*^1,Q_*^2,\pi_*^1,\pi_*^2\right>$, if the mixed-strategy criterion of Definition \ref{def:mixedNash} is satisfied, and if both players adhere to these $Q_*^i$'s and $\pi_*^i$'s for their evaluation of the game and their chosen strategies, then neither player will attempt a unilateral change of strategy, in which case the non-uniqueness of $Q$-functions will not destabilize the equilibrium.

\subsection{Nash Q-Learning}

An algorithm to compute the solution to \eqref{eq:Qi} is the Nash $Q$-learning algorithm of \cite{hu1998multiagent,hu2003nash}, which for the 2-player game is
\begin{align}
    \label{eq:Qi_stoch_iteration}
    Q_{t+1}^i(s_t,a_t^1,a_t^2)&=(1-\alpha_t)Q_{t}^i(s_t,a_t^1,a_t^2)+\alpha_t\Big( r^i(s_t,a_t^1,a_t^2)+\gamma_i\hbox{Nash}Q_t^i(s_{t+1})\Big)\ ,
\end{align}
where $\hbox{Nash}Q_t^i(s) = \pi_t^1\pi_t^2 Q_t^i(s)$ with $(\pi_t^1(s),\pi_t^2(s))$ being a Nash equilibrium for $(Q_t^1(s),Q_t^2(s))$, and with $a_t^i\sim \pi_t^i(s_t,\cdot)$ conditionally independent of $a_t^j$. Equation \eqref{eq:Qi_stoch_iteration} is a reinforcement learning algorithm for obtaining a Nash equilibrium. If for any $s\in S$ the bi-matrix game of $Q_t^1(s)$ and $Q_t^2(s)$ has either a global optimal or a saddle point,\footnote{See \cite{hu1998multiagent} for definition of global and saddle point Nash equilibria.} then $Q_t^i$ converges to $Q_*^i$ of equation \eqref{eq:Qi} as $t\rightarrow \infty$ for learning rate $\alpha_t$ taken to be $\alpha_t=\alpha_t(s_t,a_t^1,a_t^2)$ where
\[\alpha_t(s,a^1,a^2)= 
    \begin{cases}
    c_t\hspace{1cm}\hbox{if }(s,a^1,a^2)=(s_t,a_t^1,a_t^2)\\
    0\hspace{1.1cm}\hbox{otherwise,}
    \end{cases}
\]
with $c_t\in(0,1)$, and with $\sum_{t=0}^\infty\alpha_t(s,a^1,a^2)=\infty$ and $\sum_{t=0}^\infty \big(\alpha_t(s,a^1,a^2)\big)^2<\infty$ a.s. uniformly over $(s,a^1,a^2)\in S\times A^1\times A^2$, see \cite{hu2003nash,jaakkola1993convergence,szepesvari1999unified}. An important step in the implementation of \eqref{eq:Qi_stoch_iteration} is the method for finding a Nash equilibrium in the $t^{th}$ iteration. One possibility is to use the Lemke-Howson algorithm \cite{lemke1964equilibrium} to compute $\hbox{Nash}Q_t^i$ and to sample actions for the next time step.

\section{Learning with Partial-Information}
\label{sec:partialInfo}
The partial-information setting is where Player $i$ does not know the strategy of the opposing player (here forward referred to as Player $j$). For partial information we look for a tuple $\langle\overline{Q}_*^1,\overline{Q}_*^2,\overline{\pi}_*^1,\overline{\pi}_*^2\rangle$ that satisfies the following marginal Q-function equations,
\begin{align}
    \label{eq:marginal_Qi_equation}
    \overline{Q}_*^i(s,a)&=\overline{r}_*^i(s,a) + \gamma_i\sum_{s'\in S}
    \overline{\pi}_*^i(s) \overline{Q}_*^i(s')\overline{p}_*^i(s'|s,a)\\
    \nonumber
    \overline{\pi}_*^i(s)&\in\left\{\pi\in\mathcal P(A^i)\Big|\pi\overline{Q}_*^i(s)\geq \pi' \overline{Q}_*^i(s)\quad\forall \pi'\in\mathcal P(A^i)\right\}\ ,
\end{align}
where $\pi \overline{Q}_*^i(s)=\sum_{a\in A^i}\pi(a) \overline{Q}_*^i(s,a)$ for any $\pi\in\mathcal P(A^i)$, and where we have defined the following marginal quantities,
\begin{align*}
    \overline{r}_*^1(s,a) &= \sum_{a'\in A^2}r^1(s,a,a')\overline{\pi}_*^2(s,a')\\
    \overline{r}_*^2(s,a) &= \sum_{a'\in A^1}r^2(s,a',a)\overline{\pi}_*^1(s,a')\\ \overline{p}_*^i(s'|s,a) &= \sum_{a'\in A^j}\mathbb P(s_{t+1} =s'|S_t =s,a_t^i=a,a_t^j=a')\overline{\pi}_*^j(s,a')\ , 
\end{align*} 
for $i=1,2$ and $j\neq i$. Equation \eqref{eq:marginal_Qi_equation} is solved with only partial information, but a solution provides a full-information Nash strategy:  
\begin{proposition}
    \label{prop:converge_to_full_info_nash}
    Given a tuple $\langle\overline{Q}_*^1,\overline{Q}_*^2,\overline{\pi}_*^1,\overline{\pi}_*^2\rangle$ satisfying \eqref{eq:marginal_Qi_equation}, there exists $Q_*^1$ and $Q_*^2$ that solve \eqref{eq:Qi} with strategy $(\overline{\pi}_*^1,\overline{\pi}_*^2)$, and it follows that $(\overline{\pi}_*^1,\overline{\pi}_*^2)$ is a full-information Nash equilibrium. 
\end{proposition}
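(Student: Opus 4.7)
The plan is to first exhibit $Q_*^1,Q_*^2$ satisfying \eqref{eq:Qi} with the strategy pair $(\overline{\pi}_*^1,\overline{\pi}_*^2)$, and then to verify the Nash inequalities of Definition \ref{def:Nash} by recasting each player's best-response problem as a single-agent Markov decision process (MDP). For the first step, I simply set
\[
Q_*^i(s,a^1,a^2) \;:=\; r^i(s,a^1,a^2) + \gamma_i\sum_{s'\in S} v^i(s',\overline{\pi}_*^1,\overline{\pi}_*^2)\,p(s'|s,a^1,a^2)\ ,
\]
with $v^i$ given by \eqref{eq:vi}, so that \eqref{eq:Qi} holds by construction. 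The substantive content of the proposition is therefore the Nash property of $(\overline{\pi}_*^1,\overline{\pi}_*^2)$.

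To establish the Nash property, I would fix Player $j$'s strategy at $\overline{\pi}_*^j$ and regard Player $i$ as solving a single-agent MDP on state space $S$ with action set $A^i$, marginal reward $\overline{r}_*^i(s,a)$, marginal transition $\overline{p}_*^i(s'|s,a)$, and discount $\gamma_i$. The optimal Q-function for this MDP satisfies the Bellman optimality equation
\[
Q^{i,\mathrm{opt}}(s,a) \;=\; \overline{r}_*^i(s,a) + \gamma_i\sum_{s'\in S}\Big(\max_{\pi\in\mathcal P(A^i)} \pi\, Q^{i,\mathrm{opt}}(s')\Big)\overline{p}_*^i(s'|s,a)\ .
\]
The second line of \eqref{eq:marginal_Qi_equation} says exactly that $\overline{\pi}_*^i(s)$ attains $\max_{\pi\in\mathcal P(A^i)}\pi\,\overline{Q}_*^i(s)$ at every $s\in S$; substituting this maximum into the first line of \eqref{eq:marginal_Qi_equation} shows that $\overline{Q}_*^i$ also solves the optimality equation above. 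Contractivity of the Bellman operator with factor $\gamma_i<1$ on the finite product $S\times A^i$ then gives $\overline{Q}_*^i = Q^{i,\mathrm{opt}}$ and identifies $\overline{\pi}_*^i$ as an optimal stationary policy for Player $i$'s marginal MDP, so that
\[
v^i(s,\overline{\pi}_*^1,\overline{\pi}_*^2) \;\geq\; v^i(s,\pi^i,\overline{\pi}_*^j)\qquad\forall\,\pi^i\in\Pi^i\ ,
\]
which is precisely the Nash condition of Definition \ref{def:Nash}.

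The step I expect to be the main obstacle is the identification of the marginal MDP's value with the full-information value $v^i(\,\cdot\,,\overline{\pi}_*^1,\overline{\pi}_*^2)$ of \eqref{eq:vi}. This reduces to factoring the joint expectation in \eqref{eq:vi} through the marginal dynamics using the conditional independence of $a_t^1,a_t^2$ given $s_t$, and to invoking the standard MDP fact that no history-dependent deviation $\pi^i\in\Pi^i$ can outperform the optimal stationary Markov policy determined by $Q^{i,\mathrm{opt}}$. Both steps are routine in isolation, but they need to be spelled out with care because $\overline{r}_*^i$ and $\overline{p}_*^i$ implicitly depend on $\overline{\pi}_*^j$ and thereby mask the joint structure of the original two-player game.
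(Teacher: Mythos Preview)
Your proof is correct and reaches the same endpoint as the paper, but by a somewhat different route. The paper constructs the joint $\widetilde{Q}^i(s,a^1,a^2)$ as the unique fixed point of the linear contraction $\widetilde{Q}^i \mapsto r^i + \gamma_i \sum_{s'} \overline{\pi}_*^1\overline{\pi}_*^2 \widetilde{Q}^i(s')\, p(s'|\cdot)$---this is the same object as your $Q_*^i$, since $\overline{\pi}_*^1\overline{\pi}_*^2\widetilde{Q}^i(s') = v^i(s',\overline{\pi}_*^1,\overline{\pi}_*^2)$---then marginalizes to obtain $\overline{\pi}_*^j \widetilde{Q}^i = \overline{Q}_*^i$ (via uniqueness of the fixed point for the marginal linear equation) and reads off the one-stage mixed-Nash inequality of Definition~\ref{def:mixedNash}. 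You instead frame Player~$i$'s best-response problem, with $\overline{\pi}_*^j$ held fixed, as a single-agent discounted MDP, identify $\overline{Q}_*^i$ with its Bellman-optimal $Q$-function, and thereby obtain the value-function Nash inequality of Definition~\ref{def:Nash} directly. Your route is slightly more self-contained: it delivers Definition~\ref{def:Nash} without passing through Definition~\ref{def:mixedNash}, whereas the paper tacitly relies on the standard equivalence between the one-stage $Q$-function condition and the infinite-horizon Nash property. The paper's route, on the other hand, makes the identity $\overline{Q}_*^i = \overline{\pi}_*^j Q_*^i$ explicit, which is what is used later in the paper (e.g., in Figure~\ref{fig:biMatrixSimFullConvergence}).
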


\begin{proof}
    Using a contraction principle it is easy to conclude the existence of a unique fixed point $(\widetilde{Q}^1,\widetilde{Q}^2)$ such that
    \begin{align*}
        &\widetilde{Q}^i(s,a^1,a^2) = r^i(s,a^1,a^2)+ \gamma_i \sum_{s'\in S}\overline{\pi}_*^2(s')\overline{\pi}_*^2(s')\widetilde{Q}^i(s')p(s'|s,a^1,a^2)\ ,
    \end{align*}
    from which it follows that $\overline{Q}_*^i(s) = \overline{\pi}_*^j(s)\widetilde{Q}^i(s)$ for $j\neq i$. Then, using the criterion in \eqref{eq:marginal_Qi_equation} for selection of $\overline{\pi}_*^i$, we see that 
    \[\overline{\pi}_*^i\overline{\pi}_*^j\widetilde{Q}^i = \overline{\pi}_*^i\overline{Q}_*^i\geq \pi\overline{Q}_*^i=\pi\overline{\pi}_*^j\widetilde{Q}^i\quad\forall \pi\in\mathcal P(A^i)\ .\]
    Therefore, the tuple $\langle\widetilde{Q}^1,\widetilde{Q}^2,\overline{\pi}_*^1,\overline{\pi}_*^2\rangle$ is equivalent to a solution of \eqref{eq:Qi} for the full-information Q-function, and therefore $(\overline{\pi}_*^1,\overline{\pi}_*^2)$ is a full-information Nash strategy.
\end{proof}
If the marginal $\overline{\pi}_*^i$'s in \eqref{eq:marginal_Qi_equation} can somehow be selected so that they are equal to a joint Nash strategy $(\pi_*^1,\pi_*^2)$ used for computing $Q_*^1$ and $Q_*^2$ in \eqref{eq:Qi}, then there would be no difference in whether or not we work with \eqref{eq:marginal_Qi_equation} or \eqref{eq:Qi}. Indeed, \eqref{eq:marginal_Qi_equation} can be obtained by multiplying both sides of equation \eqref{eq:Qi} by the opposing player's Nash strategy and then summing over their actions. Unfortunately, $(\pi_*^1,\pi_*^2)$ is not known, and so we take the approach of seeking a Nash equilibrium via partially informed learning based on equation \eqref{eq:marginal_Qi_equation}.

The implementation of \eqref{eq:marginal_Qi_equation} will involve a stochastic algorithm where the state process transitions that depend on player actions drawn from estimated partial-information strategies. The following definition will be useful for determining which partial-information strategies are optimal: 
\begin{definition}[Set of Partial-Information Mixed Strategies]
    \label{def:setMixedNash}
    For a given vector $\overline{Q}:S\times A^i\rightarrow \mathbb R$, we denote the set of Player $i$'s partial-information mixed strategies as
    \[\mathcal M^i(\overline{Q}) = \left\{\pi\in\mathcal P(A^i)\Big|\pi \overline{Q}\geq \pi'\overline{Q}\quad\forall \pi'\in\mathcal P(A^i)\right\}\ ,\]
    for $i\in 1,2$.
\end{definition}
A stochastic approximation algorithm for \eqref{eq:marginal_Qi_equation} is
\begin{align}
    \label{eq:marginal_Qi_stoch_iteration}
    \overline{Q}_{t+1}^i(s_t,a_t^i)&=(1-\alpha_t^i)\overline{Q}_{t}^i(s_t,a_t^i)+\alpha_t^i\Big( r^i(s_t,a_t^1,a_t^2)+\gamma_i\overline{\pi}_t^i(s_{t+1})\overline{Q}_t^i(s_{t+1})\Big)\ ,
\end{align}
where player $i$ selects their strategy $\overline{\pi}_t^i(s)\in\mathcal M^i(\overline{Q}_t^i(s))$, and where given $s_t$ the action $a_t^i$ is an independent draw from $\overline{\pi}_t^i(s_t)$,
\[a_t^i\sim \overline{\pi}_t^i(s_t,\cdot)\ ,\]
and with learning rate $\alpha_t^i= \alpha_t^i(s_t,a_t^i)$ where 
\begin{equation}
    \label{eq:alpha_i}
    \alpha_t^i(s,a^i)= 
    \begin{cases}
    c_t\hspace{1cm}\hbox{if }(s,a^i)=(s_t,a_t^i)\\
    0\hspace{1.1cm}\hbox{otherwise,}
    \end{cases}
\end{equation}
where $c_t\in(0,1)$, and where  $\sum_{t=0}^\infty\alpha_t^i(s,a^i)=\infty$ and $\sum_{t=0}^\infty \big(\alpha_t^i(s,a^i)\big)^2<\infty$ a.s. uniformly over $(s,a^i)\in S\times A^i$. The important thing to notice in \eqref{eq:marginal_Qi_stoch_iteration} is that the updated Q-function has no mechanism to force the pair $(\overline{\pi}_t^1,\overline{\pi}_t^2)$ to be a joint Nash strategy; this is different from \eqref{eq:Qi_stoch_iteration} were players take a joint Nash strategy at every $t$.

\begin{remark}[$Q$-Learning for Pure Nash Strategies]
    One might ask why a $Q$-learning algorithm like \eqref{eq:marginal_Qi_stoch_iteration} should be computed with mixed rather than pure Nash strategies. Indeed, the convergence proofs in this paper will remain valid if each iteration searches over pure strategies, and the full-information Nash equilibrium argument of  Proposition \ref{prop:converge_to_full_info_nash} would still be valid for pure Nash strategies. However, if the proof of Proposition \ref{prop:converge_to_full_info_nash} is carried out to include mixed strategies, then the algorithm's limiting pure strategy may not be an equilibrium. In other words, the limit obtained from pure strategies may offer one player (or both players) an improved value function if they deviate unilaterally to a mixed strategy.
\end{remark}

\subsection{Non-Uniqueness and Spurious State Dependence}
\label{sec:uncertainty}

Some interesting things can happen due to the non-uniqueness of solutions to \eqref{eq:marginal_Qi_equation}. In particular, we can have some rather simple games for which there are multiple tuples $\langle Q_*^1,Q_*^2,\pi_*^1,\pi_*^2\rangle$ that define a Nash equilibrium. For example, if $r^1$ and $r^2$ do not depend on $s_t$, then this is just a classical bi-matrix game, and there are mixed Nash strategies $(\pi_*^1,\pi_*^2)$. However, $Q$-learning can lead to tuples $\langle Q_*^1,Q_*^2,\pi_*^1,\pi_*^2\rangle$ where $(\pi_*^1,\pi_*^2)$ is not Nash equilibrium of the bi-matrix game $(r^1,r^2)$, and which also have state dependence. 

If $r^1$ and $r^2$ do not depend on $s_t$, if both players are aware that there is no state dependence, and if both know the exact values of $r^1$ and $r^2$, then a (non-unique) full-information solution is \[Q_*^i(a^1,a^2) = r^i(a^1,a^2) + \gamma_i\pi_*^1\pi_*^2Q_*^i\ ,\] 
where $(\pi_*^1,\pi_*^2)$ is a Nash equilibrium of $(r^1,r^2)$. Here, the non-uniqueness is a direct consequence of the uncertainty issue mentioned in Section \ref{sec:nashEquibriaDefs}. In particular, these two players need to agree which Nash equilibrium to use, or perhaps have it exogenously assigned to them.

Spurious state dependence can arise in this bi-matrix game if we implement $Q$-learning with both players unaware of any Nash equilibria for $(r^1,r^2)$. That is, we consider partial-information equations
\[\overline{Q}_{t+1}^i(s_t,a_t^1) =(1-\alpha_t^i)\overline{Q}_{t}^i(s_t,a_t^1) + \alpha_t^i\Big(r^i(a_t^1,a_t^2) + \gamma_i \pi_t^1\overline{Q}_{t}^i(s_{t+1})\Big)\ , \]
which (as per the main result of this paper) will converge to a tuple $\langle \overline{Q}_*^1,\overline{Q}_*^2,\overline{\pi}_*^1,\overline{\pi}_*^2\rangle$ with strategies $(\overline{\pi}_*^1(s),\overline{\pi}_*^2(s))$ that depend non-trivially on the state, but may not be a Nash equilibrium of $(r^1,r^2)$. The interpretation of this results is that the players have learned a certain way, and their evaluation of the game is based on tuple $\langle \overline{Q}_*^1,\overline{Q}_*^2,\overline{\pi}_*^1,\overline{\pi}_*^2\rangle$ rather that the true bi-matrix game to which they are in fact both playing. The dependence on the state is an artifact of learning, but neither of these players see a benefit to changing their strategy. This is the general interpretation of the $Q$-learning for Nash equilibria: there may be another player with a different evaluation of the game (or perhaps better information) who views the tuple $\langle \overline{Q}_*^1,\overline{Q}_*^2,\overline{\pi}_*^1,\overline{\pi}_*^2\rangle$ as sub-optimal, but for the 2 players whose experience yielded this tuple they do not see any reason to deviate.

\subsection{Filtering and Fictitious Play}
It is possible that players learning under partial information will try to gain an edge by filtering. Suppose that a player's reward only depends on the state and their own action (this is to ensure there is no way to deduce the opponent's actions by observation of rewards). Let $\mathcal F_t^1$ denote the $\sigma$-algebra generated by the observations seen by player 1 up to time $t$. Given $\mathcal F_t^1$ and a strategy $\pi^2(s,a)$ taken by Player 2, Player 1 computes the posterior distribution of the actions taken by player 2,
\[\mathbb P(a_{t'}^2 = a|\mathcal F_t^1\vee\pi^2)=\ddfrac{p(s_{t'+1}|s_{t'},a_{t'}^1,a_{t'}^2 = a)\pi^2(s_{t'},a)}{\sum_{a'\in A^2}p(s_{t'+1}|s_{t'},a_{t'}^1,a_{t'}^2 = a')\pi^2(s_{t'},a')}\quad\forall t'<t\ . \]
However, during learning the opposing player's strategy is not known, and therefore Player 1 will compute the posterior with an estimate. Let $\mathcal F_t^1$ denote the information observed by Player 1 up time time $t$. Player 1 computes an estimator $\widehat{\pi}_t^{2}$ using the Expectation Maximization (EM) algorithm, 
\begin{eqnarray*}
    &\widehat{\pi}_t^{2}=\lim_{\ell\rightarrow\infty}\widehat{\pi}_t^{2,\ell}\ ,
\end{eqnarray*}
where $\widehat{\pi}_t^{2,\ell}$ are computed iteratively as
\begin{eqnarray}
    \label{eq:estimatedPi2}
    &\widehat{\pi}_t^{2,\ell+1}(s,a) = \ddfrac{\sum_{t'=0}^{t-1}\mathbb P(a_{t'}^2 = a|\mathcal F_t^1\vee \widehat{\pi}_t^{2,\ell})\mathbf 1_{s_{t'}=s}}{\sum_{t'=0}^{t-1}\mathbf 1_{s_{t'}=s}}\ ,
\end{eqnarray}
for $\ell=1,2,3\dots$. A standard result is that $\widehat{\pi}_t^{2,\ell}$ are increasing in likelihood and their limit always exists (see \cite{dempster1977maximum}). The following proposition confirms that \eqref{eq:estimatedPi2} are in fact iterations of EM algorithm:
\begin{proposition}
    The parameter estimation sequence in \eqref{eq:estimatedPi2} are iterations of the EM algorithm.
\end{proposition}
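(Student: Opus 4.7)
The plan is to verify that the update \eqref{eq:estimatedPi2} coincides with the M-step of the EM algorithm applied to Player 1's estimation problem, treating Player 2's actions $(a_{t'}^2)_{t'=0}^{t-1}$ as latent variables with observed data $\mathcal F_t^1$ (the states, Player 1's own actions, and Player 1's own rewards) and unknown parameter being Player 2's strategy $\pi^2$. I would organize the argument into a setup of the complete-data likelihood, an E-step, and an M-step, each of which is essentially mechanical once the factorization is correct.

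First I would write the complete-data joint likelihood. Since (i) the transition kernel $p(s_{t'+1}\mid s_{t'},a_{t'}^1,a_{t'}^2)$ does not depend on $\pi^2$, (ii) Player 1's own actions are drawn from her own strategy and hence are independent of $\pi^2$ conditional on states, and (iii) by hypothesis Player 1's rewards depend only on $(s_{t'},a_{t'}^1)$, the only factors in the complete-data likelihood that involve $\pi^2$ are the action-selection terms, so
\begin{equation*}
\log L_c(\pi^2) \;=\; \sum_{t'=0}^{t-1} \log \pi^2(s_{t'},a_{t'}^2) \;+\; C,
\end{equation*}
where $C$ is a random constant that does not depend on $\pi^2$. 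Taking conditional expectation over the latent $a_{t'}^2$'s with respect to the current-iterate posterior, the E-step produces
\begin{equation*}
Q\bigl(\pi^2\bigm|\widehat{\pi}_t^{2,\ell}\bigr) \;=\; \sum_{t'=0}^{t-1}\sum_{a\in A^2} \mathbb P\bigl(a_{t'}^2=a\bigm|\mathcal F_t^1\vee\widehat{\pi}_t^{2,\ell}\bigr)\,\log \pi^2(s_{t'},a) \;+\; \mathbb E[C\mid \mathcal F_t^1\vee \widehat{\pi}_t^{2,\ell}],
\end{equation*}
where the posterior weights are exactly those displayed just before \eqref{eq:estimatedPi2}.

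For the M-step I would maximize $Q(\pi^2\mid \widehat{\pi}_t^{2,\ell})$ subject to the simplex constraints $\sum_{a\in A^2}\pi^2(s,a)=1$ for each $s\in S$. Regrouping the outer sum by visited states and applying a Lagrange multiplier $\lambda_s$ per constraint, the first-order condition in $\pi^2(s,a)$ gives $\pi^2(s,a) \propto \sum_{t':s_{t'}=s}\mathbb P(a_{t'}^2=a\mid \mathcal F_t^1\vee\widehat{\pi}_t^{2,\ell})$. The normalizer $\lambda_s$ equals $\sum_{t':s_{t'}=s}\sum_{a\in A^2}\mathbb P(a_{t'}^2=a\mid \mathcal F_t^1\vee \widehat{\pi}_t^{2,\ell})$, and since each posterior distribution sums to one, this collapses to the visitation count $\sum_{t'=0}^{t-1}\mathbf 1_{s_{t'}=s}$. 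The resulting maximizer is then exactly $\widehat{\pi}_t^{2,\ell+1}$ of \eqref{eq:estimatedPi2}, which establishes the claim.

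The only step requiring real care is the factorization in the setup: one must confirm that the transition kernel, Player 1's own action probabilities, and her rewards all appear as multiplicative factors in $L_c$ that do not depend on $\pi^2$, so that logging and then taking conditional expectation absorbs them into the additive constant. Once that factorization is in hand, the E-step is a direct substitution and the M-step is a routine Lagrangian calculation whose only noteworthy feature is the simplification of the denominator via $\sum_a w_{t'}(a)=1$.
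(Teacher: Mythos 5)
Your proposal is correct and follows essentially the same route as the paper: both form the expected complete-data log-likelihood with Player 2's actions as latent variables (the transition kernel entering only as a $\pi^2$-independent factor) and then maximize over the simplex via a Lagrange multiplier per state to recover \eqref{eq:estimatedPi2}. Your version is, if anything, slightly more explicit than the paper's in isolating the $\pi^2$-independent constant and in showing that the normalizer collapses to the visitation count because each posterior sums to one.
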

\begin{proof}
    Let $a_{0:t}$ and $s_{0:t}$ denote the actions taken and observed state values up to time $t$. For any parameter value $\pi^2$, given $\mathcal F_t^1$ and $\widehat{\pi}_t^{2,\ell}$, iterations of EM algorithm are computed by maximizing the following expected log-likelihood,
    \begin{align*}
        L(\widehat{\pi}_t^{2,\ell},\pi^2)&=\mathbb E\left[\log\mathbb P\left(a_{0:t-1}^2,s_{0:t}|\mathcal F_t^1\vee \pi^2\right)\Big|\mathcal F_t^1\vee \widehat{\pi}_t^{2,\ell}\right]\\
        &=\mathbb E\left[\sum_{t'=0}^{t-1}\log(p(s_{t'+1}|s_{t'},a_{t'}^1,a_{t'}^2)\pi^2(s_{t'},a_{t'}^2)) \Big|\mathcal F_t^1\vee \widehat{\pi}_t^{2,\ell}\right]\ .
    \end{align*}
    We use a Lagrangian to find the parameter that maximizes $L(\widehat\pi_t^{2,\ell} \pi^2)$,
    \[\mathcal L(\pi^2) = L(\widehat{\pi}_t^{2,\ell}, \pi^2) +\sum_{s\in S}\sum_{a\in A^2}\lambda(s)\pi^2(s,a)\ ,\]
    where $\lambda(s)$ are Lagrange multipliers for constraints $\sum_{a\in A^2}\pi^2(s,a)=1$ for each $s\in S$. For a given $s\in S$ and $a\in A^2$, the first-order condition with respect to $\pi^2(s,a)$ is set equal to zero,
    \[\frac{\partial}{\partial \pi^2(s,a)}\mathcal L(\pi^2) = \sum_{t'=0}^{t-1}\left(\frac{\mathbb E[\mathbf 1_{a_{t'}^2=a}|\mathcal F_t^1\vee \widehat{\pi}_t^{2,\ell}]\mathbf 1_{s_{t'}=s}}{\pi^2(s,a)} +\lambda(s)\right)=0\ ,\]
    from which is follows that \eqref{eq:estimatedPi2} is the EM algorithm's updated parameter estimate.    
\end{proof}

For learning, the possible advantage to be gained is to learn an inference-based $Q$-function,
\begin{equation}
    \label{eq:Qhat}
    \widehat Q_{t+1}^1(s_t,a_t^1) = (1-\alpha_t^1)\widehat Q_{t}^1(s_t,a_t^1)+\alpha_t^1\left(r^i(s_t,a_t^1)+\gamma_i\max_{\pi\in\mathcal P(A^1)} \pi\widehat\pi_t^2\widehat Q_{t}^1(s_{t+1})\right)\ .
\end{equation}
This approach to learning could be beneficial if only one of the players does it, but if both players learn this way then there are doubts about convergence to a Nash equilibrium. 

Notice the inferred policy in \eqref{eq:estimatedPi2} is an average of posterior expectations of a full-information policy obtained from fictitious play,
\[\phi_t^{2}(s,a)= \ddfrac{\sum_{t'=0}^t\mathbf 1_{a_{t'}^2 = a}\mathbf 1_{s_{t'}=s}}{\sum_{t'=0}^t\mathbf 1_{s_{t'}=s}}\ .\]
If Player 1 is learning with fictitious play then she assumes Player 2 uses strategy $\phi_t^{2}$ and then computes a best response $\pi_t^{1}$ using an equation like \eqref{eq:Qhat} but with $\phi_t^{2}$ in place of $\widehat{\pi}_t^2$. This fictitious play approach is known to have trouble converging unless Player 2 in fact follows a stationary strategy. For this reason, it is likely that the inference-based learning in \eqref{eq:Qhat} will likely have similar troubles finding a Nash equilibrium. At the time of writing this paper it is unknown to the authors if analysis of \eqref{eq:Qhat} will lead to new conditions for convergence toward a Nash equilibrium, but it is beyond the scope of this paper's Theorem \ref{thm:partial_info_convergence} and the partial-information algorithm based on  \ref{eq:marginal_Qi_equation}.

\subsection{Main Result: Proof of Convergence}
Let $\|\cdot\|$ denote the maximum norm, i.e., $\|\overline{Q}^i(s,a)\| = \max_{a\in A^i}\max_{s\in S}|\overline{Q}^i(s,a)|$. The following lemmas and theorem show that the Q-functions in \eqref{eq:marginal_Qi_stoch_iteration} converge to a solution of \eqref{eq:marginal_Qi_equation}.
    \begin{lemma}
        \label{lemma:partial_info_convergence1}
        Let  $M<\infty$ be a constant such that $\|r^i\|\leq M$ for $i=1,2$. 
        Then for the tuple $\langle\overline{Q}_t^1,\overline{Q}_t^2,\overline{\pi}_t^1,\overline{\pi}_t^2\rangle$ given by \eqref{eq:marginal_Qi_stoch_iteration}, 
        \begin{enumerate}
            \item there are $\overline{Q}_\infty^i$  matrices such that $\|\overline{Q}_t^i-\overline{Q}_\infty^i\|\rightarrow 0$ a.s. as $t\rightarrow \infty$ with $\overline{Q}_\infty^i$ bounded for $i=1,2$, 
            \item there is sub-sequence $t_{\ell}$ and $\overline{\pi}_\infty^i\in\mathcal M^i(\overline{Q}_\infty^i)$ such that  $\|\overline{\pi}_{t_{\ell}}^i-\overline{\pi}_\infty^i\|\rightarrow 0$ a.s. as $\ell\rightarrow \infty$ for $i=1,2$.
        \end{enumerate}
    \end{lemma}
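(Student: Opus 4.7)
The plan is to view \eqref{eq:marginal_Qi_stoch_iteration} as a single-agent stochastic-approximation (Q-learning) recursion for Player $i$, whose ``environment'' is non-stationary because its reward and transition kernel depend on Player $j$'s evolving strategy $\overline{\pi}_t^j$. First I would establish uniform boundedness of $\overline{Q}_t^i$; then I would rewrite the recursion in Robbins--Monro form and invoke a stochastic-approximation theorem to obtain almost-sure convergence of $\overline{Q}_t^i$; finally, compactness of the simplex $\mathcal{P}(A^i)$ will deliver the subsequential convergence of $\overline{\pi}_t^i$, with membership in $\mathcal{M}^i(\overline{Q}_\infty^i)$ following by continuity.

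For boundedness (item 1), note that $\overline{\pi}_t^i(s)\in\mathcal{M}^i(\overline{Q}_t^i(s))$ forces $\overline{\pi}_t^i(s)\overline{Q}_t^i(s)=\max_{a\in A^i}\overline{Q}_t^i(s,a)$, so the ``target'' in \eqref{eq:marginal_Qi_stoch_iteration} is a convex combination of entries of $\overline{Q}_t^i$ and is therefore bounded in absolute value by $\|\overline{Q}_t^i\|$. Combining this with $|r^i|\leq M$ in the convex-combination recursion and inducting on $t$ gives
\[
\|\overline{Q}_t^i\|\leq \max\!\bigl(\|\overline{Q}_0^i\|,\, M/(1-\gamma_i)\bigr)\qquad\text{a.s.\ for all }t\geq 0,
\]
so any limit $\overline{Q}_\infty^i$ will likewise be bounded.

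Next, rewrite \eqref{eq:marginal_Qi_stoch_iteration} in the Robbins--Monro form
\[
\overline{Q}_{t+1}^i(s,a) = \overline{Q}_t^i(s,a) + \alpha_t^i(s,a)\bigl[(T_t^i\overline{Q}_t^i)(s,a) - \overline{Q}_t^i(s,a) + \xi_t^i(s,a)\bigr],
\]
where $T_t^i$ is the marginalized Bellman operator built from the $\overline{r}_t^i$ and $\overline{p}_t^i$ defined just below \eqref{eq:marginal_Qi_equation} (with $\overline{\pi}_t^j$ in place of $\overline{\pi}_\ast^j$), and $\xi_t^i(s,a)$ is a martingale-difference term with bounded conditional variance. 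Since $\max_a$ is $1$-Lipschitz in sup-norm, each $T_t^i$ is a $\gamma_i$-contraction on the finite-dimensional space of Q-matrices. Under the Robbins--Monro conditions on $\alpha_t^i$ from \eqref{eq:alpha_i}, a unified stochastic-approximation theorem in the style of Szepesv\'ari--Littman (cited in the paper) yields $\|\overline{Q}_t^i - \overline{Q}_\infty^i\|\to 0$ almost surely for some bounded random matrix $\overline{Q}_\infty^i$, establishing item 1. For item 2, compactness of $\mathcal{P}(A^i)$ gives a subsequence $t_\ell$ with $\overline{\pi}_{t_\ell}^i\to\overline{\pi}_\infty^i$; the inequality $\overline{\pi}_{t_\ell}^i\overline{Q}_{t_\ell}^i(s)\geq \pi\overline{Q}_{t_\ell}^i(s)$, holding for every $\pi\in\mathcal{P}(A^i)$, passes to the limit by joint continuity combined with the Q-convergence already obtained, giving $\overline{\pi}_\infty^i\in\mathcal{M}^i(\overline{Q}_\infty^i)$.

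The main obstacle is the full-sequence convergence in item 1: the driving operator $T_t^i$ drifts with $t$ through $\overline{\pi}_t^j$, so one must rule out that distinct subsequential limits of $\overline{\pi}_t^j$ produce distinct subsequential limits of $\overline{Q}_t^i$. The cleanest route is to show that $T_t^i$ depends continuously on $\overline{\pi}_t^j$ and to combine this with a tracking estimate for the Robbins--Monro iteration, so that any two subsequential limits of $\overline{Q}_t^i$ must coincide with the fixed point of the same limiting contraction. The fact that $\overline{\pi}_t^i$ is only claimed to converge along a subsequence sidesteps this issue on the strategy side, because the simplex is compact and the $\mathcal{M}^i$ relation is closed.
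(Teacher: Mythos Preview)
Your approach to Part~1 differs substantially from the paper's. The paper does not set up a Robbins--Monro / stochastic-approximation framework in this lemma at all. Instead, after recording the uniform bound $\|\overline{Q}_t^i\|\leq C(\|\overline{Q}_0^i\|,M)$, it simply observes that
\[
\sum_{t=0}^\infty \|\overline{Q}_{t+1}^i-\overline{Q}_t^i\|^2 \;\leq\; 3\bigl(2C^2(\|\overline{Q}_0^i\|,M)+M^2\bigr)\sum_{t=0}^\infty \alpha_t^2 \;<\;\infty,
\]
and from square-summability of the increments concludes directly that $\overline{Q}_t^i$ has a finite limit. The stochastic-approximation machinery is deferred to Lemma~\ref{lemma:partial_info_convergence2}, where Theorem~1 of \cite{jaakkola1993convergence} is applied to the difference $\overline{Q}_t^i-\widetilde{Q}_t^i$. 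So the paper's route for this lemma is purely ``step-size driven'' and does not touch the contraction structure of the Bellman operator; this sidesteps entirely the drifting-operator issue you raise.

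On your side, the obstacle you flag is real, and your proposed resolution does not close it. To conclude that ``any two subsequential limits of $\overline{Q}_t^i$ must coincide with the fixed point of the same limiting contraction,'' you would need the limiting contraction to be unique---i.e., $\overline{\pi}_t^j$ to converge along the full sequence. But $\overline{\pi}_t^j$ is determined by $\overline{Q}_t^j$, which faces the identical difficulty, so the argument is circular. Distinct subsequential limits of $\overline{\pi}_t^j$ yield distinct operators $T_\infty^i$ with potentially distinct fixed points, and a tracking estimate only tells you $\overline{Q}_t^i$ is close to the fixed point of the \emph{current} $T_t^i$, not that these fixed points agree across subsequences. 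The Szepesv\'ari--Littman style theorems you cite typically assume the sequence of operators (or at least of their fixed points) converges, which is precisely what is at issue. Your Part~2 argument (compactness of $\mathcal{P}(A^i)$ plus passing the optimality inequality to the limit) matches the paper's.
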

    \begin{proof}
        Part 1) From \eqref{eq:marginal_Qi_stoch_iteration} it can be deduced there is a finite constant depending on $\|\overline{Q}_0^i\|$ and $M$ such that $\|\overline{Q}_{t}^i\|\leq C(\|\overline{Q}_0^i\|,M)$ for all $t\geq 0$, and it follows that 
        \begin{align*}
        &\sum_{t=0}^\infty \|\overline{Q}_{t+1}^i-\overline{Q}_{t}^i\|^2\\
        &\leq\sum_{t=0}^\infty \alpha_t^2\Big|\overline{Q}_{t}^i(s_t,a_t^i)+  r^i(s_t,a_t^1,a_t^2)+\gamma_i\overline{\pi}_t^j(s_{t+1})\overline{Q}_t^i(s_{t+1})\Big|^2\qquad\hbox{for }j\neq i\\
        &\leq 3\Big(2C^2(\|\overline{Q}_0^i\|,M)+M^2\Big)\sum_{t=0}^\infty\alpha_t^2\\
        &<\infty\ ,
        \end{align*}
        where we've used the generic inequality $(a+b+c)^2\leq 3(a^2+b^2+c^2)$. Hence, the tail of the series is convergent, and we conclude there is a finite limit such that $(\overline {Q}_t^1,\overline {Q}_t^2)\rightarrow (\overline {Q}_\infty^1,\overline {Q}_\infty^2)$ a.s. as $t\rightarrow\infty$. . 
        
        Part 2) Because $\mathcal P(A^i)$ is compact, by the Bolzano-Weierstrass theorem we can say that $\overline \pi_t^i$ has a convergent sub-sequence. Then, for any  $\mu(s)\in\mathcal M^i(\overline{Q}_\infty^i(s))$ we have
        \begin{align*}
            0&\leq (\mu(s)-\overline \pi_t^i)\overline Q_\infty(s) \\
            &=(\mu(s)-\overline \pi_t^i)(\overline Q_\infty^i(s)-\overline Q_t^i(s)) +\underbrace{(\mu(s)-\overline \pi_t^i(s))\overline Q_t^i(s)}_{\leq 0} \\
            &\leq (\mu(s)-\overline \pi_t^i(s))(\overline Q_t^i(s)-\overline Q_\infty^i(s))\rightarrow 0\ ,
        \end{align*}
        and so for a convergent sub-sequence $\overline{\pi}_{t_{\ell}}^i$ there is $\overline{\pi}_\infty(s)\in\mathcal M^i(\overline{Q}^i(s))$ such that $\|\overline{\pi}_{t_{\ell}}^i-\overline{\pi}_\infty^i\|\rightarrow 0$ a.s. as $\ell\rightarrow\infty$.  
    \end{proof}
    The limiting tuple $\langle\overline{Q}_\infty^1,\overline{Q}_\infty^2,\overline{\pi}_\infty^1,\overline{\pi}_\infty^2\rangle$ from Lemma \ref{lemma:partial_info_convergence1} cannot be assumed to be a solution to \eqref{eq:marginal_Qi_equation}; instead we must prove it. Convergent sub-sequences of the $\overline{\pi}_t^i$'s from Lemma \ref{lemma:partial_info_convergence1} is an indication that we will be able to show convergence in a similar manner to the convergence results in \cite{hu1998multiagent,szepesvari1999unified} for pseudo-contraction operators \cite{bertsekas1989convergence,feyzmahdavian2014convergence}. 
    
    Consider the $Q$-function $
\widetilde{Q}_{t}^i$ computed using $\overline{\pi}_t^j$,
    \begin{equation}
        \label{eq:fixed_point_t}
        \widetilde{Q}_{t+1}^i(s,a) =\overline{r}_t^i(s,a) + \gamma_i\sum_{s'\in S}\max_{\pi\in\mathcal P(A^i)}\pi\widetilde{Q}_{t}^i(s')\overline{p}_t^i(s'|s,a)\ ,
    \end{equation}
    where 
    \begin{align*}
        \overline{r}_t^1(s,a)&= \sum_{a^2\in A^2} r^1(s,a^1,a^2)\overline{\pi}_t^2(s,a^2)\\
        \overline{r}_t^2(s,a)&= \sum_{a^1\in A^1} r^2(s,a^1,a^2)\overline{\pi}_t^1(s,a^1)\ ,
    \end{align*}
    and $\overline{p}_t^i(s'|s,a)= \sum_{a^j\in A^j}\mathbb P(s_{t+1}=s'|s_t=s,a_t^i=a,a_t^j=a^j)\overline{\pi}_t^j(s_t,a^j)$.

\begin{lemma}
    \label{lemma:partial_info_convergence2}
    Let $M<\infty$ be a constant such that $\|r^i\|\leq M$ for $i=1,2$. 
    Then $\|\overline{Q}_t^i-\widetilde{Q}_{t}^i\|\rightarrow  0$ a.s. as $t\rightarrow \infty$, for $i=1,2$.
\end{lemma}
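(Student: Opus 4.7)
The plan is to cast the stochastic recursion (eq:marginal\_Qi\_stoch\_iteration) and the deterministic recursion (eq:fixed\_point\_t) against a common operator-valued framework and then to apply a pseudo-contraction stochastic approximation argument to their difference. First I would observe that for each $t$, the operator $T_t^i[Q](s,a) := \overline{r}_t^i(s,a)+\gamma_i\sum_{s'} \max_{\pi\in\mathcal{P}(A^i)}\pi Q(s')\, \overline{p}_t^i(s'\mid s,a)$ is a $\gamma_i$-contraction in the max norm, uniformly in $t$ and in the (random) strategies $\overline{\pi}_t^j$. This follows because $Q \mapsto \max_\pi \pi Q(s')$ is $1$-Lipschitz and $\overline{p}_t^i(\cdot\mid s,a)$ is a probability distribution on $S$. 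By the same argument as in Lemma \ref{lemma:partial_info_convergence1} Part 1, $\widetilde{Q}_t^i$ is bounded a.s. with a constant depending on $\|\widetilde{Q}_0^i\|$ and $M$.

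Next I would rewrite the stochastic iteration as a Robbins--Monro update against $T_t^i$. Because $\overline{\pi}_t^i\in\mathcal{M}^i(\overline{Q}_t^i)$, the bracketed term $\overline{\pi}_t^i(s_{t+1})\overline{Q}_t^i(s_{t+1})$ in (eq:marginal\_Qi\_stoch\_iteration) equals $\max_\pi \pi \overline{Q}_t^i(s_{t+1})$. Defining $w_t(s,a) := r^i(s_t,a_t^1,a_t^2)+\gamma_i\max_\pi \pi\overline{Q}_t^i(s_{t+1}) - T_t^i[\overline{Q}_t^i](s,a)$ when $(s,a)=(s_t,a_t^i)$ and $0$ otherwise, the conditional mean of $w_t(s,a)$ given the history up to $t$ is zero (by the conditional independence of $a_t^j$ and the Markov transition law). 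This puts the stochastic iteration in the form $\overline{Q}_{t+1}^i(s,a) = (1-\alpha_t^i(s,a))\overline{Q}_t^i(s,a) + \alpha_t^i(s,a)\bigl(T_t^i[\overline{Q}_t^i](s,a) + w_t(s,a)\bigr)$, with $w_t$ bounded by the bound on $\overline{Q}_t^i$ from Lemma \ref{lemma:partial_info_convergence1}.

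The crux is then to derive a componentwise recursion for $\Delta_t^i := \overline{Q}_t^i - \widetilde{Q}_t^i$. Subtracting (eq:fixed\_point\_t) from the Robbins--Monro form, and using the contraction property to bound $\|T_t^i[\overline{Q}_t^i] - T_t^i[\widetilde{Q}_t^i]\|\leq\gamma_i\|\Delta_t^i\|$, I would obtain an estimate whose non-noise part is a pseudo-contraction in $\|\Delta_t^i\|$. A standard Robbins--Siegmund / pseudo-contraction argument in the spirit of \cite{bertsekas1989convergence,feyzmahdavian2014convergence,szepesvari1999unified}, together with $\sum_t\alpha_t^i=\infty$ and $\sum_t(\alpha_t^i)^2<\infty$, then forces $\|\Delta_t^i\|\to 0$ a.s.

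The main obstacle is the asynchrony: $\overline{Q}_t^i$ updates only the single coordinate $(s_t,a_t^i)$ per step, whereas $\widetilde{Q}_t^i$ performs the full Bellman application $T_t^i$ to every $(s,a)$ at once. The resolution is to use the uniform step-size divergence $\sum_t\alpha_t^i(s,a)=\infty$ a.s. for every $(s,a)$ (so each coordinate is touched infinitely often) in combination with the $\gamma_i$-contraction of $T_t^i$, which propagates the gap at unvisited coordinates through to the next visit. A secondary subtlety is that the operator $T_t^i$ is time-varying and $\overline{\pi}_t^j$ need only converge along a subsequence by Lemma \ref{lemma:partial_info_convergence1} Part 2; I would therefore avoid any appeal to convergence of $T_t^i$ as an operator, and instead use the \emph{same} $T_t^i$ on both sides of the $\Delta_t^i$ recursion so that the contraction factor $\gamma_i$ applies cleanly at each step regardless of whether $\overline{\pi}_t^j$ itself stabilises.
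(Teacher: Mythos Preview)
Your approach is essentially the same as the paper's: both write $\Delta_t=\overline Q_t^i-\widetilde Q_t^i$ as a Robbins--Monro recursion whose noise term $w_{t+1}$ has $\bigl|\mathbb E[w_{t+1}\mid\mathcal G_t]\bigr|\leq\gamma_i\|\Delta_t\|$ and conditional variance bounded by a constant plus a multiple of $\|\Delta_t\|^2$, then conclude via a stochastic-approximation convergence theorem---the paper invokes Theorem~1 of \cite{jaakkola1993convergence} directly, which is precisely the pseudo-contraction result you cite from \cite{szepesvari1999unified}. Your explicit flagging of the asynchrony between the synchronous update of $\widetilde Q_t^i$ in \eqref{eq:fixed_point_t} and the single-coordinate update of $\overline Q_t^i$ in \eqref{eq:marginal_Qi_stoch_iteration} is a subtlety the paper's proof passes over without comment.
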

\begin{proof}
    Without loss of generality, we can prove the limit for Player 1. Define
    \[\Delta_t(s,a^1) := \overline{Q}_{t}^1(s,a^1)-\widetilde{Q}_{t}^1(s,a^1)\ ,\]
    for all $(s,a^1)\in S\times A^1$. From \eqref{eq:marginal_Qi_stoch_iteration} and \eqref{eq:fixed_point_t} it follows that
    \begin{align*}
        &\Delta_{t+1}(s,a^1) = (1-\alpha_t^1(s,a^1))\Delta_{t}(s,a^1)+\alpha_t^1(s,a^1) w_{t+1}\ ,
    \end{align*}
    with $\alpha_t^1(s,a^1)$ as defined in \eqref{eq:alpha_i} and where 
    \begin{align*}
        &w_{t+1} = r^1(s_t,a_t^1,a_t^2)-\sum_{a^2\in A^2}r^1(s_t,a_t^1,a^2)\overline{\pi}_t^2(s_t,a^2)\\
        &~~~~+\gamma_1\left(\max_{\pi\in \mathcal P(A^1)}\pi \overline{Q}_t^1(s_{t+1}) -\sum_{a^2\in A^2}\sum_{s'\in S}\max_{\pi\in \mathcal P(A^1)}\pi\widetilde{Q}_{t}^1(s')p(s'|s_t,a_t^1,a^2)\overline{\pi}_t^2(s_t,a^2)\right)\ .
    \end{align*}
    Let $\mathcal G_t$ denote the $\sigma$-algebra generated by $(\Delta_u(s,a^1))_{u\leq t}$, $(\alpha_u(s,a^1))_{u\leq t}$, $(w_u)_{u\leq t}$, $(s_u,a_u^1)_{u\leq t}$, and $(\overline\pi_u^2)_{u\leq t}$ (note that $a_{t}^2$ is not observable under $\mathcal G_t$, but $\widetilde{Q}_{t}^1$ is observed). Because $a_t^2$ is an independent draw from $\overline{\pi}_t^2(s_t,\cdot)$, it follows that 
    \begin{eqnarray*}
    \Big|\mathbb E[w_{t+1}|\mathcal G_t]\Big|&\leq& \gamma_1\|\Delta_t\|\\
    \hbox{var}(w_{t+1}|\mathcal G_t)&=&\hbox{var}(r^1(s_t,a_t^1,a_t^2)+ \gamma_1\max_{\pi\in \mathcal P(A^1)}\pi \overline{Q}_t^1(s_{t+1})|\mathcal G_t)\\
    &=&\hbox{var}(r^1(s_t,a_t^1,a_t^2)+ \gamma_1(\max_{\pi\in \mathcal P(A^1)}\pi \overline{Q}_t^1(s_{t+1})-\max_{\pi\in \mathcal P(A^1)}\pi \widetilde{Q}_{t}^1(s_{t+1}))\\
    &&\hspace{2cm}+\gamma_1\max_{\pi\in \mathcal P(A^1)}\pi \widetilde{Q}_{t}^1(s_{t+1})|\mathcal G_t)\\
    &\leq &3M^2+3\gamma_1^2\|\Delta_t\|^2+3\gamma_1^2\mathbb E\left[\|\widetilde{Q}_{t}^1\|^2\Big|\mathcal G_t\right]\\
    &\leq &3\left(M^2 + \gamma_1^2\left(\frac{M}{1-\gamma_1}+\|\widetilde{Q}_0^i\|\right)^2\right)+3\gamma_1^2\|\Delta_t\|^2
    \end{eqnarray*}
    where we've used the generic inequality $(a+b+c)^2\leq 3(a^2+b^2+c^2)$, the generic inequality $|\max_xf(x)-\max_xg(x)|\leq \max_x|f(x)-g(x)|$, and we have used the bound $\|\widetilde{Q}_{t}^1\|\leq \frac{M}{1-\gamma_1}+\|\widetilde Q_0^i\|$ (deduced from \eqref{eq:fixed_point_t}). From Theorem 1 in \cite{jaakkola1993convergence} it follows that $\Delta_t=\widetilde{Q}_{t}^1-\widetilde{Q}_{t}^1$ converges to zero a.s. as $t\rightarrow \infty$. 
\end{proof}
\begin{theorem}
    \label{thm:partial_info_convergence}
    Let $M<\infty$ be a constant such that $\|r^i\|\leq M$ for $i=1,2$. 
    The tuples $\langle\overline{Q}_t^1,\overline{Q}_t^2,\overline{\pi}_t^1,\overline{\pi}_t^2\rangle$ given by \eqref{eq:marginal_Qi_stoch_iteration} produce a limit $\langle\overline{Q}_\infty^1,\overline{Q}_\infty^2,\overline{\pi}_\infty^1,\overline{\pi}_\infty^2\rangle$ that is a solution to \eqref{eq:marginal_Qi_equation}, where $(\overline{Q}_t^1,\overline{Q}_t^2)\rightarrow (\overline{Q}_\infty^1,\overline{Q}_\infty^2)$ a.s. as $t\rightarrow \infty$ and there is a sub-sequence $t_\ell$ such that $(\overline{\pi}_{t_\ell}^1,\overline{\pi}_{t_\ell}^2)\rightarrow(\overline{\pi}_\infty^1,\overline{\pi}_\infty^2)$ a.s. as $\ell\rightarrow\infty$.
\end{theorem}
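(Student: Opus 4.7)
The plan is to combine Lemmas \ref{lemma:partial_info_convergence1} and \ref{lemma:partial_info_convergence2}, and then pass to the limit in the deterministic recursion \eqref{eq:fixed_point_t} along a convergent sub-sequence of policies. The two lemmas already do the heavy lifting; the theorem is essentially a matter of reconciling the sub-sequential convergence of the $\overline{\pi}_t^i$ with the full-sequence convergence of the $\overline{Q}_t^i$, and then checking that the equation \eqref{eq:marginal_Qi_equation} is inherited in the limit.

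First I would invoke Lemma \ref{lemma:partial_info_convergence1} Part 1 to fix limits $\overline{Q}_\infty^i$ with $\overline{Q}_t^i\to \overline{Q}_\infty^i$ a.s.\ for $i=1,2$. Using Part 2, together with compactness of $\mathcal P(A^1)\times\mathcal P(A^2)$, I would extract a common sub-sequence $t_\ell$ along which $(\overline{\pi}_{t_\ell}^1,\overline{\pi}_{t_\ell}^2)\to (\overline{\pi}_\infty^1,\overline{\pi}_\infty^2)$ a.s.\ with $\overline{\pi}_\infty^i\in\mathcal M^i(\overline{Q}_\infty^i)$ for $i=1,2$; a diagonal argument (or passage to a sub-sub-sequence) makes one $t_\ell$ work for both players. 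Lemma \ref{lemma:partial_info_convergence2} then yields $\widetilde{Q}_t^i\to \overline{Q}_\infty^i$ a.s.\ along the full sequence, so in particular $\widetilde{Q}_{t_\ell}^i$ and $\widetilde{Q}_{t_\ell+1}^i$ both converge to $\overline{Q}_\infty^i$.

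The key step is then to take limits on both sides of \eqref{eq:fixed_point_t} at $t=t_\ell$. On the left, $\widetilde{Q}_{t_\ell+1}^i(s,a)\to \overline{Q}_\infty^i(s,a)$. On the right, $\overline{r}_{t_\ell}^i(s,a)\to \overline{r}_\infty^i(s,a)$ and $\overline{p}_{t_\ell}^i(s'|s,a)\to \overline{p}_\infty^i(s'|s,a)$ by linearity in the parameter $\overline{\pi}_{t_\ell}^j$, where $\overline{r}_\infty^i$ and $\overline{p}_\infty^i$ are the marginal reward and transition built from $\overline{\pi}_\infty^j$ exactly as in \eqref{eq:marginal_Qi_equation}. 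The inner quantity $\max_{\pi\in\mathcal P(A^i)}\pi\widetilde{Q}_{t_\ell}^i(s')$ converges to $\max_{\pi\in\mathcal P(A^i)}\pi\overline{Q}_\infty^i(s')$ because the map $Q\mapsto \max_\pi \pi Q$ is Lipschitz in the maximum norm with constant one. Since $\overline{\pi}_\infty^i\in\mathcal M^i(\overline{Q}_\infty^i)$, this maximum equals $\overline{\pi}_\infty^i(s')\overline{Q}_\infty^i(s')$, so the limiting equation reads exactly as \eqref{eq:marginal_Qi_equation}, which gives the desired tuple.

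The main obstacle I anticipate is the asymmetry between full convergence of the Q-functions and merely sub-sequential convergence of the policies: the subsequence used to identify $\overline{\pi}_\infty^i$ must not disturb the limit already found for $\overline{Q}_\infty^i$. This is handled by the way Part 2 of Lemma \ref{lemma:partial_info_convergence1} is phrased, holding $\overline{Q}_\infty^i$ fixed while selecting the policy limit inside $\mathcal M^i(\overline{Q}_\infty^i)$. A secondary subtlety is that \eqref{eq:fixed_point_t} features $\widetilde{Q}_t^i$ rather than $\overline{Q}_t^i$ on its right-hand side, which is precisely what Lemma \ref{lemma:partial_info_convergence2} is designed to bridge; without that lemma one would not be able to close the loop between the stochastic iteration \eqref{eq:marginal_Qi_stoch_iteration} and the deterministic fixed-point equation \eqref{eq:marginal_Qi_equation}.
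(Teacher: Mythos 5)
Your proposal is correct and follows essentially the same route as the paper: combine Lemma \ref{lemma:partial_info_convergence1} (full-sequence convergence of $\overline{Q}_t^i$ plus a common sub-sequence of policies with limits in $\mathcal M^i(\overline{Q}_\infty^i)$) with Lemma \ref{lemma:partial_info_convergence2}, and pass to the limit along $t_\ell$ in \eqref{eq:fixed_point_t}, using linearity in $\overline{\pi}_{t_\ell}^j$ and the $1$-Lipschitz property of $Q\mapsto\max_\pi\pi Q$. The only cosmetic difference is that the paper carries out this limit passage as an explicit triangle-inequality bound on the residual of \eqref{eq:marginal_Qi_equation}, which is the same argument in substance.
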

\begin{proof}
    From Lemma \ref{lemma:partial_info_convergence1} we know there is a limit for the tuple $\langle\overline{Q}_t^1,\overline{Q}_t^2,\overline{\pi}_t^1,\overline{\pi}_t^2\rangle$, where $(\overline{Q}_t^1,\overline{Q}_t^2)\rightarrow (\overline{Q}_\infty^1,\overline{Q}_\infty^2)$ a.s. as $t\rightarrow\infty$, and where there is a sub-sequence $t_\ell$ such that $(\overline{\pi}_{t_\ell}^1,\overline{\pi}_{t_\ell}^2)\rightarrow (\overline{\pi}_\infty^1,\overline{\pi}_\infty^2)$ a.s. as $\ell\rightarrow \infty$, with $\overline{\pi}_\infty^i\in\mathcal M^i(\overline{Q}_\infty^i)$ for $i=1,2$. The theorem will be proved if we can show that $\overline{Q}_\infty^i$ and $\overline{\pi}_\infty^j$ for $j\neq i$ solve \eqref{eq:marginal_Qi_equation}. Observe the following:
    \begin{align}
    \nonumber
    &\left| \overline{Q}_\infty^1(s,a^1) -\overline{r}_\infty^1(s,a^1)-\gamma_1 \sum_{a^2\in A^2}\sum_{s'\in S}\max_{\pi\in\mathcal P(A^1)}\pi \overline{Q}_\infty^1(s')p(s'|s,a^1,a^2)\overline{\pi}_\infty^2(s,a^2)\right|\\
    \nonumber
    &\leq\left| \overline{Q}_\infty^1(s,a^1)-\widetilde{Q}_{t+1}^1(s,a^1) \right|\\
    \nonumber
    & + \left|\overline{r}_\infty^1(s,a^1)-\overline{r}_t^1(s,a^1)\right|\\
    \nonumber
    &+\gamma_1 \sum_{a^2\in A^2}\sum_{s'\in S}\max_{\pi\in\mathcal P(A^1)} \left|\pi\overline{Q}_\infty^1(s')\overline{\pi}_\infty^2(s,a^2)-\pi\widetilde{Q}_t^1(s')\overline{\pi}_t^2(s,a^2)\right|p(s'|s,a^1,a^2)\\
    \nonumber
    &\leq \| \overline{Q}_\infty^1-\widetilde{Q}_{t+1}^1\|+\gamma_1\| \overline{Q}_\infty^1-\widetilde{Q}_{t}^1\| + (M+C(\|\overline{Q}_0^i\|,M))\|\overline{\pi}_\infty^2-\overline{\pi}_t^2\|\\
    \nonumber
    &\leq \| \overline{Q}_\infty^1-\overline{Q}_{t+1}^1\|+\| \overline{Q}_{t+1}^1-\widetilde{Q}_{t+1}^1\|\\
    \nonumber
    &~~~~+\gamma_1\left(\| \overline{Q}_\infty^1-\overline{Q}_{t}^1\|+\| \overline{Q}_{t}^1-\widetilde{Q}_{t}^1\|\right)\\
    \label{eq:proofCalc}
    &~~~~+
    (M+C(\|\overline{Q}_0^i\|,M))\|\overline{\pi}_\infty^2-\overline{\pi}_t^2\|\ ,
    \end{align}
    where $C(\|\overline{Q}_0^i\|,M)$ is the bound from Lemma \ref{lemma:partial_info_convergence1} such that $\|\overline{Q}_\infty^1\|\leq C(\|\overline{Q}_0^i\|,M)<\infty$. Now,  $\|\overline{\pi}_\infty^2 - \overline{\pi}_{t_{\ell}}^2\|$ converges to zero a.s. as $\ell\rightarrow\infty$, and therefore $\|\overline{Q}_\infty^1 - \widetilde{Q}_{t_\ell}^1\|\rightarrow 0$ a.s. as $\ell\rightarrow \infty$ by Lemma \ref{lemma:partial_info_convergence2}, and so all the terms in the bottom line of \eqref{eq:proofCalc} go to zero. Hence,
    \[\left|\overline{Q}_\infty^1(s,a^1) -\overline{r}_\infty^1(s,a^1)-\gamma_1 \sum_{a^2\in A^2}\sum_{s'\in S}\max_{\pi\in\mathcal P(A^1)}\pi \overline{Q}_\infty^1(s')p(s'|s,a^1,a^2)\overline{\pi}_\infty^2(s,a^2)\right|=0\ ,\]
     and we conclude that $\langle\overline{Q}_\infty^1,\overline{Q}_\infty^1,\overline{\pi}_\infty^1,\overline{\pi}_\infty^2\rangle$ is a solution of \eqref{eq:marginal_Qi_equation}.
\end{proof}

\section{Simulated Examples on General 2-Player Games}
\label{sec:simulations}
In this section we present output for three examples: a randomly generated bi-matrix game, the 2-player Gridworld game, and LeDuc Hold'em poker. Gridworld and LeDuc Hold'em are large enough simulations that we need to implement a deep neural network algorithm to reduce runtime. The results in this section empirically show convergence to Nash policies.

\subsection{Simulated Random Bi-Matrix Game}
As a first demonstration we implement partial-information Q-learning on a randomly generated bi-matrix game. Let the action spaces be $A^1 = \{0,1,2,\dots,d_1-1\}$ and $A^2 =\{0,1,2,\dots,d_2-1\}$ with $d_1 =5$, $d_2 =7$, and let the state space be $S=\{0,1,2,3,\dots,d_s-1\}$ with $d_s=10$. Take the rewards to be $r^i\in \mathbb R^{d_1\times d_2}$ with
	\begin{align*}
		r^1(s,a^1,a^2)&\sim U(0,1)\\
		r^2(s,a^1,a^2) &\sim h \times r^1(s,a^1,a^2) + (1-h)\times U(0,1)\ ,
	\end{align*}
	with $h=.8$, $\gamma_1 = .9$ and $\gamma_2 = .8$, and where $U(0,1)$ denotes an independent draw from a uniform distribution on $[0,1]$. We also take the transition probabilities for $s_t$ to be proportional to uniform draws,
	\[p(s_{t+1}=s'|s_t = s,a_t^1 = a,a_t^2=a')\propto U(0,1)\ ,\]
	for all $s,s'\in S$ and $a\in A^1 $ and $a'\in A^2$. We run Algorithm \ref{alg:Qlearning} for 4,000 iterations with $\alpha_t = \frac{1}{1+\lfloor \tfrac{t}{250}\rfloor}$. After learning strategies $\overline\pi_*^1$ and $\overline\pi_*^2$ we compute the full-information $Q$ functions as per Proposition \ref{prop:converge_to_full_info_nash},
	\begin{align*}
		Q_*^i(s_t,a_t^1,a_t^2) &= r^i(s_t,a_t^1,a_t^2)+\gamma_i\mathbb E_{s_t,a_t^1,a_t^2}\overline\pi_*^1\overline\pi_*^2 Q_*^i(s_{t+1})\ .
	\end{align*}
	Figures \ref{fig:biMatrixSimQ1andQ2}, \ref{fig:biMatrixSimFullConvergence} and \ref{fig:biMatrixSimQ1andQ2_all} show the results of this randomly generated game. Partial information works well but also requires solving of 4 linear programs at each iteration, and so runtime is not much faster than full-information using Lemke-Howson.

\begin{algorithm}
\caption{Nash Q learning for 2-Player Partial-Information Game}
\label{alg:Qlearning}
\begin{algorithmic}
\State Initialize state $s_0$ and vectors $\overline Q^1,\overline{Q}^2$;
\For{$t=0,1,2...,$ max\_iter}
    \For{$i=1,2$}
        \State \hbox{compute } $\overline\pi^i(s_t)\in\mathcal M(\overline Q^i(s_t))$;
        \State \hbox{draw action }$a^i\sim \overline\pi^i(s_{t})$;
    \EndFor
    \State \hbox{draw state }$s_{t+1}\sim p(\cdot|s_{t},a^1,a^2)$;
    \For{$i=1,2$}
        \State \hbox{compute } $\overline\pi^i(s_{t+1})\in\mathcal M(\overline Q^i(s_{t+1}))$;
    	\State $\overline{Q}^i(s_{t},a^i) \gets (1-\alpha_t)\overline{Q}^i(s_t,a^i) +\alpha_t\left( r^i(s_{t},a^1,a^2)+\gamma_i\overline{\pi}^i(s_{t+1})\overline{Q}^i(s_{t+1})\right)$;
    \EndFor
\EndFor
\end{algorithmic}
\end{algorithm}

\begin{figure}[h!]
    \centering
    \includegraphics[width=.75\textwidth]{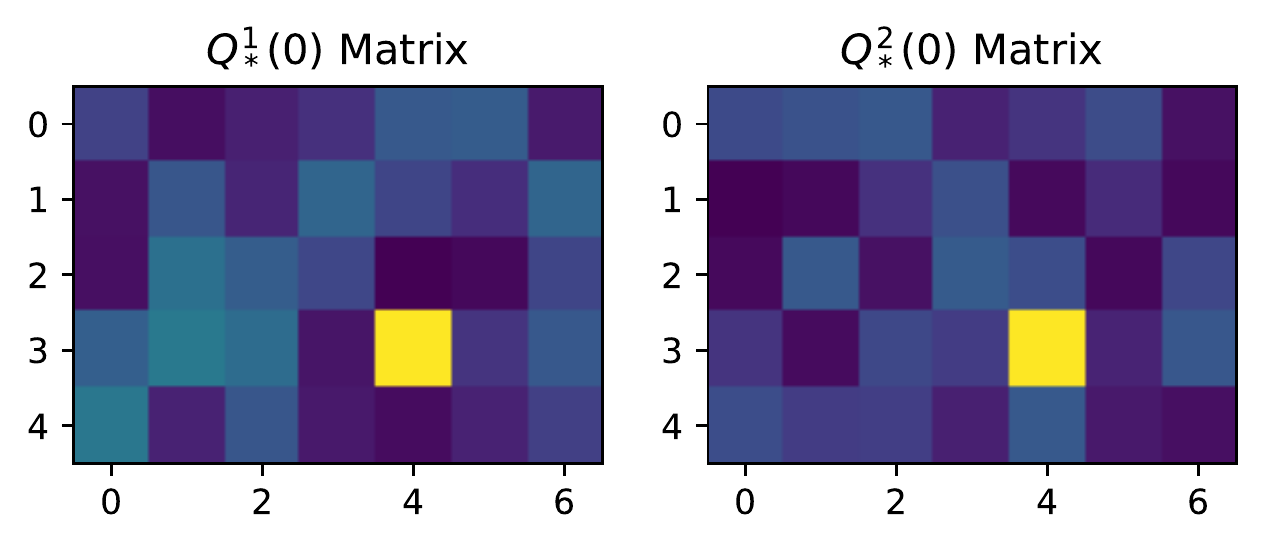}
    \caption{For $s = 0$ in the simulated bi-matrix game, the full-information Q matrices computed using the partial-information strategies, $Q_*^i(s_t,a_t^1,a_t^2) = r^i(s_t,a_t^1,a_t^2)+\gamma_i\mathbb E_{s_t,a_t^1,a_t^2}\overline\pi_*^1\overline\pi_*^2 Q_*^i(s_{t+1})$, shown here for state $s_t=0$ wherein the Nash equilibrium is $\overline\pi^1(0) = (1,0,0,0,0)$ and $\overline\pi^2(0) = (0,0,0,0,1,0,0)$ form a Nash equilibrium.}
    \label{fig:biMatrixSimQ1andQ2}
\end{figure}

\begin{figure}[h!]
    \centering
    \includegraphics[width=.75\textwidth]{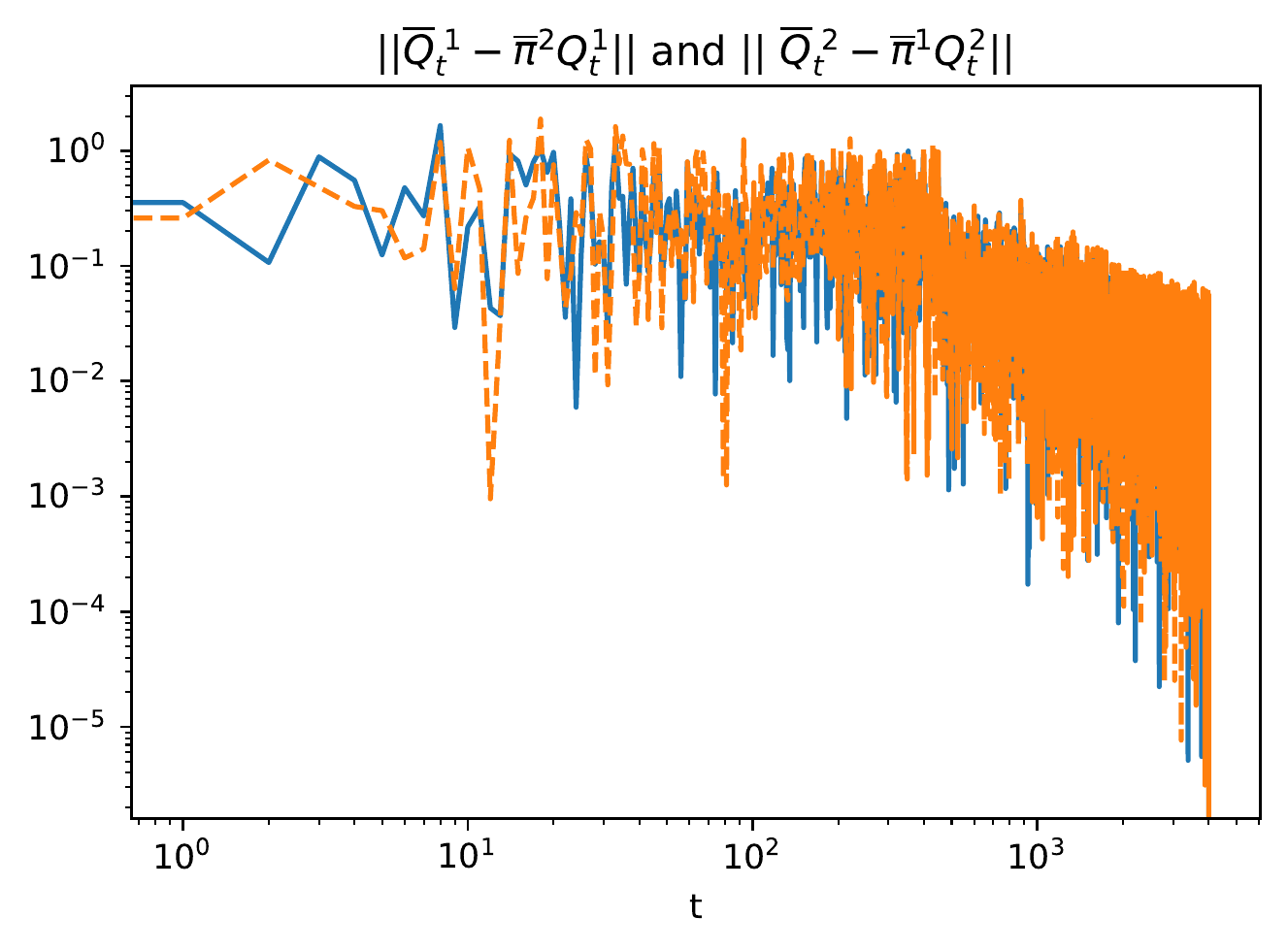}
    \caption{This plot shows convergence in the simulated bi-matrix for the partial-information $Q$-function relative to the full-information $Q$-function multiplied by $\overline{\pi}_t$.}
    \label{fig:biMatrixSimFullConvergence}
\end{figure}

\begin{figure}[t]
    \centering
    \minipage{\textwidth}
     \includegraphics[width=\textwidth]{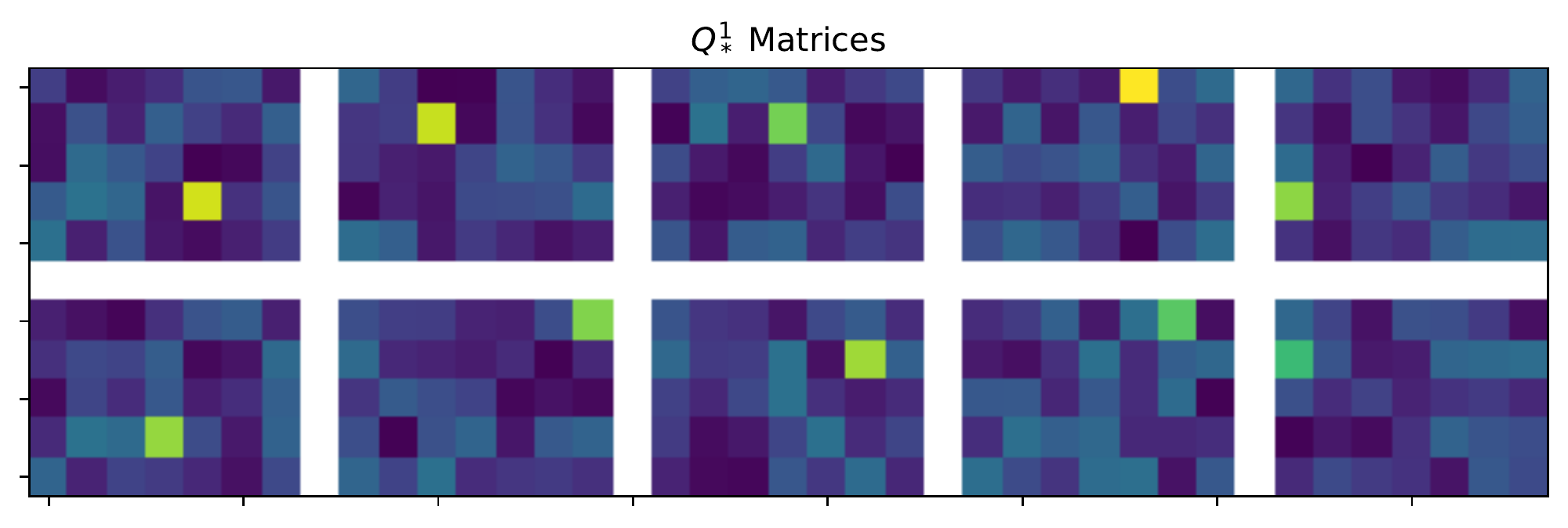}
    \endminipage\hfill
    \minipage{\textwidth}
    \includegraphics[width=\textwidth]{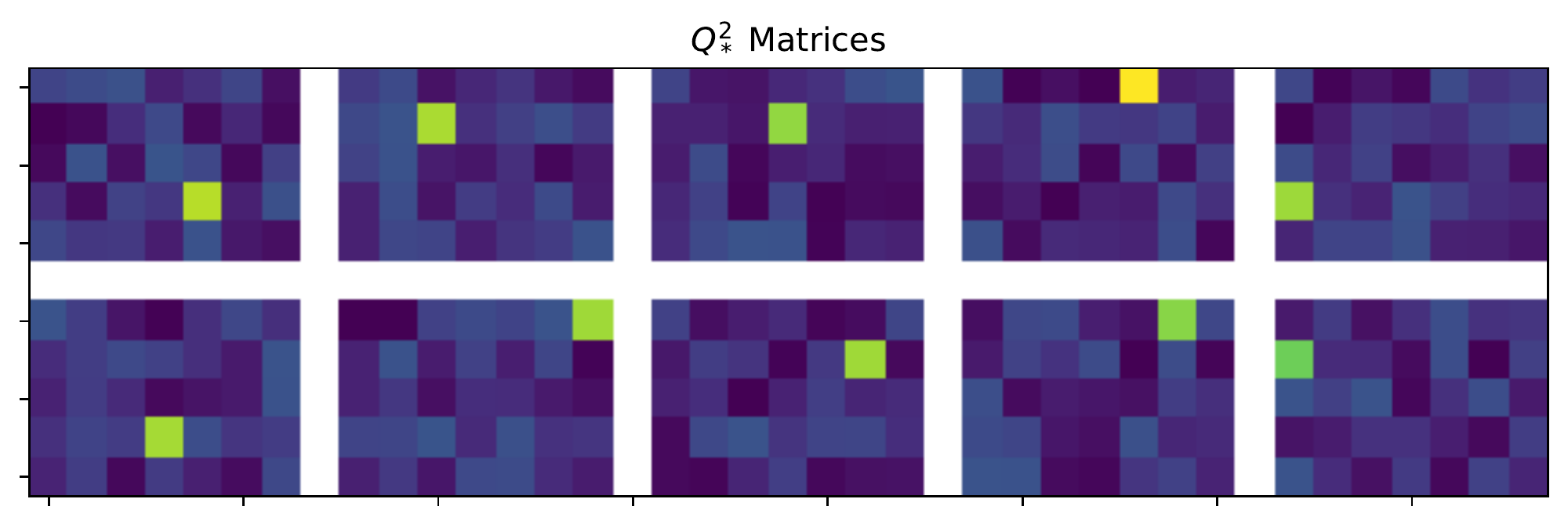}
    \endminipage
    \caption{For the simulated bi-matrix game, these are the full-information $Q_*^i(s_t,a_t^1,a_t^2) = r^i(s_t,a_t^1,a_t^2)+\gamma_i\mathbb E_{s_t,a_t^1,a_t^2}\overline\pi_*^1\overline\pi_*^2 Q_*^i(s_{t+1})$ where $(\overline\pi_*^1,\overline\pi_*^2)$ is the Nash equilibrium from the  partial-information algorithm. There are 10 possible values for $s_t$, and so we see 10 $Q$-functions for each player.}
    \label{fig:biMatrixSimQ1andQ2_all}
\end{figure}

\subsection{Examples from the Python RL Library}
For increasingly larger-scale games intensifies the need to use Deep Neural Networks: Deep layers help us learn more aspects of the game during the experience replay/exploration phase. The neural network approximation of the $Q$ functions is
\[Q_*^i(s,a)\approx  Q(s,a|\theta_*^i)\ ,\]
where $\theta_*^i$ are the optimal network parameters for Player $i$. As our Algorithm \ref{alg:DQN_partial} dictates, and as shown in Figure \ref{fig:General DQN architecture}, every player in the games we simulate will be associated with a neural network consisting of densely connected layers. Each neural network has an input layer for which the number of nodes is fixed and corresponds to the game's state-space shape, some hidden optimizing layers, and an output layer that also has a fixed number of nodes corresponding to the number of actions in an environment. The players progressively learn the optimal policies and the corresponding optimal value function through experience replay \cite{DBLP:journals/corr/abs-2003-02372}; in Algorithm \ref{alg:DQN_partial} there is a `max\_memory' parameter that controls the length of the replay buffer. The learning is optimized by alternating between exploration and exploitation \cite{https://doi.org/10.48550/arxiv.2109.06668}. The novelty in the deep learning models introduced in this section is the use of our partial-information Q-learning equations instead of the full-information updates. The purpose of modeling large-scale games is to test the effectiveness of our method for more complex models. We also want to compare our method's solution to results found when we solve the same games for fully informed players.

The number of hidden layers, the number of its nodes, and the choice of activation functions are hyper-parameters of the architecture. For simulations and neural network implementation, we use the Python library Keras \cite{ketkar2017introduction}.

\begin{figure}[h!]
    \centering
    \includegraphics[width=.85\textwidth]{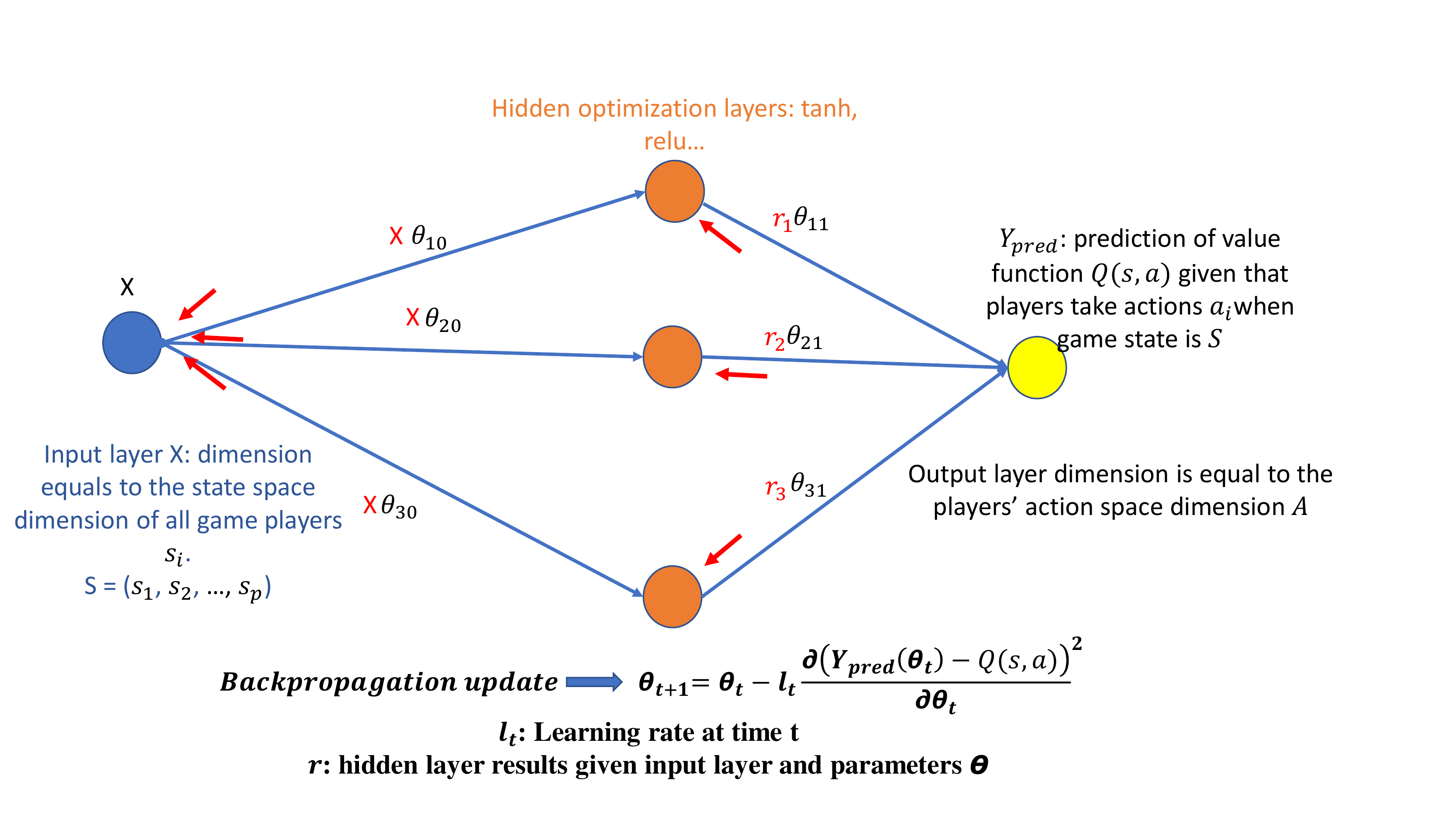}
    \caption{This figure presents the general architecture used to solve for optimal players strategies when faced with larger scale games. The choice of activation functions and number of hidden layers are hyper parameters that should be considered carefully for each game.}
    \label{fig:General DQN architecture}
\end{figure}

The Gridworld and LeDuc Hold'em games are examples of episodic learning, whereas the example games from earlier in this paper were learned over an infinite horizon. Episodes, as well as the new learning features (i.e., experience replay) that we introduce, are outside the assumptions of Theorem \ref{thm:partial_info_convergence}, but partial information is still an integral part of Algorithm \ref{alg:DQN_partial}. Indeed, theoretical analysis of Gridworld and LeDuc Hold'em learning via Algorithm \ref{alg:DQN_partial} will be part of future follow-up work from paper.

\subsubsection{Gridworld 2-Player Game}
The Gridworld game we define is presented in Figure \ref{fig:gridworld}. Our game has deterministic moves: Two agents start from respective lower corners, trying to reach their target cells in the top row. Each agent can only move one cell at a time, and in four possible directions: Left, Right, Up, Down. If two agents attempt to move into the same cell or move outside the map, they are bounced back to their previous cells and penalized with a negative reward. The game ends as soon as at least one agent reaches its goal. Reaching the target cell earns a positive reward. In case both agents
reach their target cells at the same time, both are rewarded with positive payoffs. The objective of an agent in this game is to reach target cells as soon as possible. Since both agents can win if they reach target at the same time. A natural Nash equilibrium is a sequence of moves that result in both agents attempting to win with a minimum number of steps. We assume that agents do not know the locations of their targets at the beginning of the learning period. Furthermore, players do not know their own and the other the player's reward functions. They observe only their own current location and the other player's current location before making a decision. 

The Action space is defined as follows:
\begin{align}
        A^1=A^2=\{\hbox{Left, Right, Up, Down}\},
\end{align}
We also define the state space as the the tuple of both players' location: $S=(L^1,L^2)=\{(0,1),(1,2)...(79,80)\}$ where we define $L^1$ and $L^2$ to be the location of Player 1 and Player 2 on the grid. Finally, we define the reward functions for each player $i=1,2$:
\begin{equation*}
r^i(S,a^1,a^2) = \begin{cases}
10 &\text{if $L^i=Target^i$ }\\
-0.5 &\text{if $L^1=L^2$ or move outside map}\\
0    & \text{else.}
\end{cases}
\end{equation*}
Figure \ref{fig:gridworld_conv} and Table \ref{tab:opt_grid} show the learning process results of our Q-agents. We see that our algorithm is able to converge to the optimal Nash equilibrium after running for approximately 6000 episodes. The theoretical optimal Nash equilibrium occurs when both players exit at the same time for a cumulative reward of 20 points. Highlighting the importance of access to complete state-space information, we solve the problem for blind players (i.e., players only have information about their own location) and compare results. The same algorithm fails to learn the optimal Nash equilibrium when players are blind, which demonstrates the importance of access to state-space information.
\begin{table}[ht]
    \centering
        \begin{tabular}{ p{10em} p{10em} p{14em} }
        \hline
        \textbf{Player type} & \textbf{Convergence} & \textbf{Reaching Nash point} \\
        \hline
        \\
        Partially informed & Converges &  100\% of the time \\
        \\
        \hline
        \\
        Blind  & No convergence & $\approx 15\%$ of the time \\
        \\
        \hline
        \\
    \end{tabular}
    \caption {2-player Gridworld convergence results}
    \label{tab:opt_grid}
\end{table}

\begin{figure}[H]
\minipage{0.48\textwidth}
  \includegraphics[width=\linewidth]{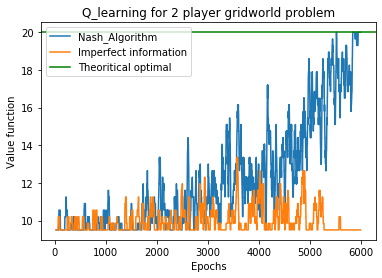}
  \caption{DQN vs Blind Agent Nash learning}\label{fig:gridworld_conv}
\endminipage\hfill
\minipage{0.48\textwidth}
  \includegraphics[width=\linewidth]{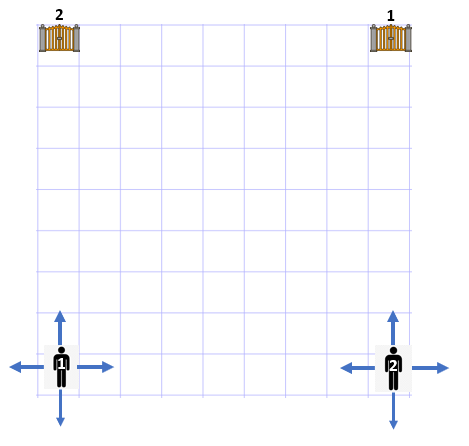}
  \caption{The Gridworld game}\label{fig:gridworld}
\endminipage
\end{figure}

\begin{algorithm}
\caption{DQN Nash learning for 2-Player Partial-Information Game}
\label{alg:DQN_partial}
\begin{algorithmic}
\State Initialize $\theta_0^1,\theta_0^2$;
\State Initialize experience replay memory: $M_Q^i = \emptyset$;
\State Initialize hyper parameters for learning rate decay: $decay$;
\State Initialize estimated number of epochs : $T$;
\State Initialize exploration-exploitation rate: $\epsilon_1 = 1$;
\For{$t=1,2...$ till convergence}
    \State \# Initialize game and observe initial state $s$;
    
    \While{$s'$ not terminal}
    \State $U\sim \hbox{uniform}(0,1)$;
    \For{$i=1,2$}
    \If {$U\leq\epsilon_t$}
    \State $a^i\sim\hbox{uniform}(A^i)$;
    \Else
    \State $a^i=\argmax_{a\in A^i}Q^i(s,\cdot|\theta_{t-1})$; 
    \EndIf
    \EndFor
    \State $s'\sim p(\cdot|s,a^1,a^2)$;
    \For{$i=1,2$}
    \State Set $\widehat{Q}^i(s,a^i) \gets r^i(s,a^1,a^2)+\gamma_i\max_{\pi\in \mathcal P(A^i)}\pi Q^i(s'|\theta_{t-1})$;
    \State Set $M_Q^i\gets M_Q^i\cup \{(s,a^i,r^i,s',\widehat{Q}^i(s,a^i)\}$;
    \If{$t>$max\_memory}
    \State Remove oldest element in $M_Q^i$
    \EndIf
    \EndFor
    
    \EndWhile
    
    \For{$i=1,2$}
   
    \State Set Loss($\theta^i$)= $\mathbb E_{(s,a^i,r^i,s') \sim M_Q^i}[(\widehat{Q}^i(s,a^i)-Q^i(s,a^i|\theta^i))^2]$;
    \State Update: $\theta_t^i=\theta_{t-1}^i-l_t^i\nabla_\theta \hbox{Loss}(\theta_{t-1}^i)$;   
    \EndFor
    \State $\epsilon_{t+1} \gets \epsilon_{t}- \frac{1}{t}$;
    \State $l_{t+1}\gets\frac{l_{t}}{1+ decay*T} $;

\EndFor
\end{algorithmic}
\end{algorithm}

\subsubsection{Leduc Hold'em game}
In this section we test our algorithm on the Leduc Hold'em poker 2-player game. Leduc Hold'em is a toy poker game that has been widely studied \cite{southey2012bayes} and is simulated using the python library Rlcard \cite{zha2019rlcard}. It is usually played with a deck of six cards, comprising two suits (two kings, two queens, and two jacks). Each player is dealt one private card, followed by a betting round. Then, another card is dealt face-up followed by another betting round. Finally, the players reveal their private cards. If one player's private card is the same rank as the board card, he or she wins the game; otherwise the player whose private card has the higher rank wins.

For our test with Leduc Hold'em poker game we define three scenarios. In the first scenario we model a Neural Fictitious Self Player \cite{zhang2021monte} competing against a random-policy player. Figure \ref{fig:Fictitious vs Random} shows the learning process of our NFSP player who is able to earn more rewards, taking advantage of the ``irrationality" of his opponent. In the second scenario, 2 NFSP players are competing against each other. Results shown in Figure \ref{fig:Fictitious vs Fictitious} demonstrate that our player earns lower rewards when facing another ``rational" player. In the third scenario, strategies made by both players follow our DQN algorithm. Results in Figure \ref{fig:DQN vs DQN} show that the learning process is more volatile due to the imperfect information setting. However, the obtained Nash equilibrium is comparable to the one found in scenario 2. This shows that our algorithm is a viable and reliable alternative to other algorithms that require a significantly higher computational effort.

\begin{figure}[H]
    \centering
    \includegraphics[width=6.5cm,height=5cm]{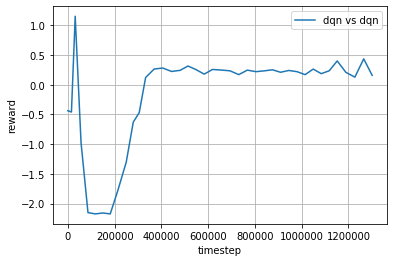}
    \caption{DQN vs DQN}
    \label{fig:DQN vs DQN}
\end{figure}

\begin{figure}[H]
\minipage{0.45\textwidth}
  \includegraphics[width=\linewidth]{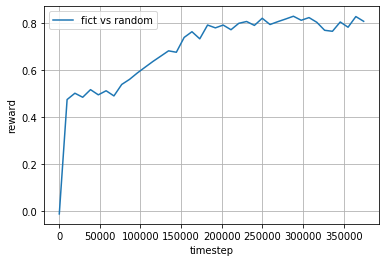}
  \caption{NFSP vs Random}\label{fig:Fictitious vs Random}
\endminipage\hfill
\minipage{0.45\textwidth}
  \includegraphics[width=\linewidth]{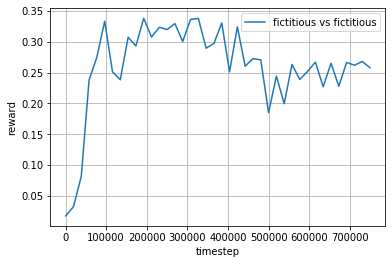}
  \caption{NFSP vs NFSP}\label{fig:Fictitious vs Fictitious}
\endminipage
\end{figure}

\section{Conclusion}
We have presented an analysis of partial-information Q-learning, which is similar to the Q-learning algorithm of \cite{hu1998multiagent}. One of the benefits of this approach is that is bypasses the Lemke-Howson algorithm when updating the Q-functions. We have implemented our algorithm on the Gridworld game and LeDuc poker using deep neural network approximation, and have shown results indicating that the algorithm converges to a Nash equilibrium that is comparable to those obtained from full-information Q-learning and neural fictitious play. 

The follow-up work for this paper is several fold. One direction is an investigation of how this algorithm scales, which we would do with new examples of increasing dimensionality (i.e., large actions spaces, large state space, and more than 2 players). Another direction would be to research some of the nuances of experience replay and to develop possibly theory for understanding of convergence when it is used.


\bibliography{nashAlgoRefs.bib}
\bibliographystyle{plain}

\end{document}